\documentclass[article]{IEEEtran}
\usepackage{amsmath}
\usepackage{amsfonts}
\usepackage{amssymb}
\usepackage{amsthm}
\usepackage{tabularx}
\usepackage{mathrsfs} 
\usepackage{graphicx}
\usepackage[colorlinks,citecolor=blue,linkcolor=blue]{hyperref}

\newtheorem{remark}{Remark}
\newtheorem{theorem}{Theorem}
\newtheorem{corollary}{Corollary}

\newtheorem{proposition}{Proposition}
\newtheorem{definition}{Definition}

\newcommand{\E}{\mathbb{E}}
\newcommand{\Prob}{\mathbb{P}}

\begin{document}

\title{How Many Independent Modes Does a Fluid Antenna Have? A Closed-Form Outage Analysis via Equivalent Degrees of Freedom}

\author{Tuo~Wu, 
	Junteng Yao,
Kai-Kit~Wong,~\IEEEmembership{Fellow,~IEEE}, 
Jie Tang, 
Maged Elkashlan, 
Baiyang Liu,\\
Kin-Fai Tong,~\IEEEmembership{Fellow,~IEEE}, and 
Hyundong Shin,~\IEEEmembership{Fellow,~IEEE} 
\vspace{-7mm}

\thanks{(\textit{Corresponding author:  Kai-Kit Wong.})}
\thanks{T. Wu, and J. Tang are with the School of Electronic and Information Engineering, South China University of Technology, Guangzhou 510640, China (E-mail: $\rm \{wutuo,eejtang\}@scut.edu.cn$). J. Yao is with the Faculty of Electrical Engineering and Computer Science, Ningbo University, Ningbo 315211, China (E-mail: $ \rm  yaojunteng@nbu.edu.cn$). K.-K. Wong is with the Department of Electronic and Electrical Engineering, University College London, WC1E 6BT London, U.K., and also affiliated with the Department of Electronic Engineering, Kyung Hee University, Yongin-si, Gyeonggi-do 17104, Republic of Korea (E-mail: $\rm kai$-$\rm kit.wong@ucl.ac.uk$). M. Elkashlan is with the School of Electronic Engineering and Computer Science at Queen Mary University of London, London E1 4NS, U.K. (E-mail: $\rm maged.elkashlan@qmul.ac.uk$). B. Liu and K.-F. Tong are with the School of Science and Technology, Hong Kong Metropolitan University, Hong Kong SAR, China (E-mail: $\rm \{byliu, ktong\}@hkmu.edu.hk$). H. Shin is with the Department of Electronics and Information Convergence Engineering, Kyung Hee University, Yongin-si, Gyeonggi-do 17104, Republic of Korea (e-mail: $\rm hshin@khu.ac.kr$).}  
}
\markboth{}
{Wu \MakeLowercase{\textit{et al.}}: Closed-Form Outage Analysis for FAS via EDoF}

\maketitle

\begin{abstract}
In a fluid antenna system (FAS), a single reconfigurable antenna is able to activate one of $N$ correlated ports to exploit spatial diversity. However, outage analysis is challenging because exact evaluation requires an $N$-dimensional multivariate integral, while existing closed-form approximations based on block-correlation models tend to underestimate the true outage probability. This paper shows that the spatial correlation matrix of a FAS with a normalized linear aperture length $W$ has at most $K^{*}=2\lceil W\rceil+1$ significant eigenmodes, regardless of the number of deployed ports. This is a spatial counterpart of the Slepian-Landau-Pollak spectral concentration theorem and reveals that the spatial degrees of freedom are determined by aperture size rather than port count. Motivated by this result, we derive an \emph{equivalent degree of freedom} (EDoF) approximation, under which the outage probability can be expressed in closed form as that of selection combining over $K^{*}$ independent branches. We propose a refined \emph{weighted independent modes} (WIM) approximation, to incorporate eigenvalue-dependent branch weights $\{\beta_k\}$ and yield a product-form closed-form expression with improved accuracy at moderate signal-to-noise ratio (SNR). Both approximations achieve the exact diversity order, become asymptotically exact at high SNR, and provably never underestimate the true outage probability by Anderson's inequality. The proposed framework is further extended to obtain closed-form expressions for ergodic capacity, characterize multi-user fluid antenna multiple access (FAMA) with explicit interference-limited outage floors. Besides, we analyze two-dimensional planar FAS, for which the diversity order scales multiplicatively with the aperture dimensions. 
\end{abstract}

\begin{IEEEkeywords}
Fluid antenna system (FAS), outage probability, Karhunen--Lo\`{e}ve expansion, diversity order.
\end{IEEEkeywords}

\vspace{-2mm}
\section{Introduction}\label{sec:intro}
\IEEEPARstart{M}{ultipath} fading is a persistent obstacle in wireless communication design, and spatial diversity remains among the most effective tools for combating it. Conventional fixed-position antenna (FPA) systems obtain spatial diversity by placing radiating elements at fixed locations with sufficient spacing between them (half wavelength in rich scattering environments) \cite{Simon05}. However, the requirement for multiple radio-frequency (RF) chains, combined with the rigid geometry of FPA arrays, imposes significant cost, power consumption, and form-factor constraints that become increasingly difficult to satisfy in compact devices. As wireless systems evolve toward the sixth generation (6G), there is growing demand for antenna technologies that deliver substantial diversity gains without proportionally increasing hardware complexity.

Fluid antenna systems (FAS) offer one route around these constraints~\cite{FAS21,FAS20}. FAS is a hardware-agnostic system concept that considers the antenna as a reconfigurable physical-layer resource, emphasizing shape and position flexibility~\cite{NewFAS23,NewJSAC26,LuFAS25,lu2025_fluid_antennas_mag,WuWC25,Bepari-2026,WuScalable26}. In its canonical form, a single RF chain connects to a given physical aperture, within which the radiating element can be dynamically repositioned amongst many candidate locations, referred to as \emph{ports}. Selecting the port with the strongest channel at each instant yields spatial diversity without additional RF hardware, well suited to size-constrained platforms such as wearables, and internet-of-things (IoT) devices. Practical realizations include mechanically movable elements~\cite{ZhuMA24}, liquid-based antennas~\cite{ShenDesign,wang2026_em_reconfig_fas}, reconfigurable pixel surfaces~\cite{ZhangPixel,liu2025_wideband_pixel_fas,wong2026_pixel_meet_fas}, and metamaterial-based structures \cite{Liu-meta,Zhang-jsac2026}. In~\cite{TongDesign}, Tong {\em et al.}~provided a comparative analysis to discuss the tradeoffs among different implementation technologies for FAS in terms of reconfiguration speed, insertion loss, and fabrication cost.

The seminal papers \cite{FAS21,FAS20} set off a large body of follow-on work. Central to all performance analyses is the spatial correlation structure across the ports. Under the Jakes (Clarke) isotropic scattering model~\cite{Jakes74}, the correlation between any two ports is characterized by the zeroth-order Bessel function of the first kind, $J_0(\cdot)$, which decreases with inter-port spacing. In \cite{FAS22}, a closed-form approximation for the spatial correlation parameter of FAS channels was derived. Later, an analytical approximation of the FAS channel was given in \cite{Khammassi23}. Following this, \cite{NewOutage24} derived new results on outage probability and diversity gain for FAS, while the diversity-multiplexing tradeoff of multiple-input multiple-output (MIMO)-FAS was examined in~\cite{New24div} and \cite{WuBessel26}. On the other hand, copula-based techniques~\cite{Ghadi23copula,GhadiCopula24}, port selection under limited channel state information (CSI)~\cite{Chai22}, and channel estimation methods for FAS~\cite{XuCE24,NewCE25} have also received much attention.

Computing the outage probability accurately, without prohibitive cost, for FAS is very challenging. The exact analysis requires the joint cumulative distribution function (CDF) of channel gains across all~$N$ ports, a multivariate integral of dimension $N$ that becomes intractable for tens or hundreds of ports. To tackle this, Ram\'{i}rez-Espinosa~\emph{et al.}~\cite{BC24} introduced the block-correlation model (BCM), which partitions the $N$ ports into $D$ independent blocks, each containing $B$ equi-correlated ports. This block structure yields tractable closed-form outage expressions by reducing the multivariate integral to a product of one-dimensional integrals. Lai~\emph{et al.}~\cite{LaiX} and Wu~\emph{et al.}~\cite{TuoW} subsequently studied variable BCM (VBCM) extensions that assign block-specific correlation parameters and block sizes, providing greater modeling flexibility. Both approaches are analytically convenient, yet their shared assumption of block independence discards the residual inter-block correlations present in the Jakes model. The result is a systematic underestimate of the true outage.

Alternatively, the Karhunen-Lo\`{e}ve (KL) expansion, a standard tool for representing correlated random fields through uncorrelated modes, can offer a more principled treatment of FAS spatial correlation. For any fixed number of retained modes, the KL truncation minimizes the reconstruction mean-square error (MSE), placing it on firm information-theoretic ground. Anderson's inequality~\cite{Anderson55} further ensures that a KL-based outage calculation never falls below the true value. Despite this potential, no prior work has translated the KL expansion into a simple closed-form outage formula for FAS. Straightforward KL-based evaluation calls for multi-dimensional numerical quadrature over the retained modes, with cost that grows exponentially in the mode count. As a result, a FAS spanning a few wavelengths may require seven or more modes, at which point direct quadrature becomes impractical.

Outage analysis for FAS in single-user settings is only the first step. FAS has been utilized for multiple access, leading to the fluid antenna multiple access (FAMA) concept \cite{WongFAMA} that tackles inter-user interference by using receiver-side position reconfigurability. There have been several FAMA variants such as fast FAMA \cite{WongFastFAMA}, slow FAMA \cite{WongsFAMA}, coded FAMA~\cite{HongCoded,hong20255gcoded,hong2025downlink}, compact ultra-massive antenna arrays (CUMA)~\cite{WongCUMA} and turbo FAMA \cite{waqar2026_turbocharging_fama,waqar2026_attentional_copula_fama}. A recent survey in \cite{fama-overview2026} covers the latest developments of FAMA and its variants.

FAS has also been applied in many situations. For example, \cite{TuoW,tang2023_secret_fas,GhadiSec24,VegaSec24} considered the use of FAS for physical layer security (PLS) while \cite{NewNOMA24,WuCoNOMA26} combined FAS with non-orthogonal multiple access (NOMA). In \cite{GhadiRIS24,WuV2X25,xiao2025fluid,ghadi2025perf,ghadi2025fires,wong2021vision,wong2025efas,WuSmartCity25}, reconfigurable intelligent surface (RIS) aided communications was synergized by FAS. Additionally, integrated sensing and communications (ISAC) has found FAS a solution for desirable performance \cite{WangISAC24,zhou2024_fas_isac_wcl,zou2024_isac_tradeoff,meng2025_isac_network}.

Despite the progress above, a gap persists in the FAS analytical toolkit: \emph{every existing closed-form outage approximation, BCM and VBCM alike, produces estimates that fall below the true outage}. The KL expansion could in principle yield conservative bounds through Anderson's inequality, but has not been explored for FAS. In summary, no single closed-form expression currently exists that is simultaneously (i)~trivial to evaluate, (ii)~provably conservative (i.e., an upper bound on the true outage), and (iii)~consistent with the correct diversity order. Such an expression would support system dimensioning, performance certification, and the engineering intuition needed to understand how aperture size, port count, and signal-to-noise ratio (SNR) jointly govern outage behavior.

In this paper, we connect FAS spatial correlation to the classical theory of bandlimited signals~\cite{Slepian61,Landau62}. The Slepian-Landau-Pollak concentration theorem establishes that a bandlimited signal observed over a finite interval has only finitely many significant degrees of freedom, determined by the time-bandwidth product. An exact spatial analogue holds for the Jakes kernel: its eigenvalues sharply separate into a dominant cluster of approximately $N/K^*$ each and a tail that decays to zero exponentially, where the boundary falls at index $K^* = 2\lceil W\rceil + 1$. That count, which we call the \emph{equivalent degrees of freedom} (EDoF), depends only on $W$ and not $N$. From this spectral structure, the outage probability reduces to selection combining over $K^*$ independent branches, a formula that is conservative by construction and exact in its diversity order. A refined variant, referred to as the \emph{weighted independent modes} (WIM) approximation, replaces the equal-power branches of EDoF with $K^*$ eigenvalue-weighted independent branches, yielding a product-form closed-form expression that tightens the bound at moderate SNR while retaining the diversity order.

Our main contributions are summarized as follows:
\begin{itemize}
\item \textbf{EDoF approximation.} A closed-form outage expression is derived that depends only on the normalized antenna aperture and is equivalent to selection combining over $K^*$ independent and identically distributed (i.i.d.)~Rayleigh branches. The formula requires no eigendecomposition, no numerical integration, and no knowledge of the port count, making it the simplest available outage characterization for FAS. It is also the first conservative closed-form outage bound for FAS reported in the literature.
\item \textbf{Refined WIM.} A product-form refinement is developed that models the FAS as selection combining over $K^*$ independent branches with unequal weights $\{\beta_k\}$ drawn from the normalized eigenvalues of the spatial correlation matrix, offering improved accuracy at moderate SNR while retaining a closed-form structure. The refined formula reduces to the EDoF expression when all eigenvalues are equal, providing a unified framework.
\item \textbf{Diversity order and high-SNR tightness.} Both the EDoF and WIM approximations are shown to yield the exact diversity order $K^*$, confirming that no diversity is lost in the simplification. The EDoF formula is further proved to be asymptotically tight at high SNR, with a quantified and bounded multiplicative gap to the exact outage that is characterized by a closed-form constant.
\item \textbf{Ergodic capacity.} A closed-form ergodic capacity expression is derived under the EDoF model using order statistics of i.i.d.~exponential random variables~\cite{Simon05,David03}, providing a rapid evaluation tool for system-level capacity assessment without Monte Carlo simulation.
\item \textbf{FAMA extension.} The EDoF framework is extended to FAMA systems, yielding a closed-form outage expression that captures the interference-limited outage floor. This reveals how the number of users, the aperture size, and the outage threshold jointly determine the floor level.
\item \textbf{Two-dimensional (2D) planar FAS.} The EDoF framework is generalized from linear to planar FAS apertures using the Kronecker structure of the 2D spatial correlation. The resulting 2D EDoF $K_{2\mathrm{D}}^*$ grows multiplicatively with the two aperture dimensions, showing that even modest apertures can achieve high diversity orders.
\item \textbf{Systematic comparison with BCM/VBCM.} A unified comparison is provided between the proposed conservative EDoF/refined-WIM bounds and the overly optimistic BCM/VBCM models~\cite{BC24,LaiX,TuoW}. The ordering of all approximations relative to the exact outage is formally established, and conditions under which each approximation is tightest are derived.
\end{itemize}

The rest of this paper is organized as follows. Section~\ref{sec:system} presents the system model, the FAS channel formulation, and reviews the KL expansion. Section~\ref{sec:edof} derives the EDoF approximation and analyzes its diversity order. In Section~\ref{sec:refined}, we develop the refined WIM and study its properties. Section~\ref{sec:extensions} presents the extensions to ergodic capacity, BCM/VBCM comparison, FAMA, 2D planar FAS, and design guidelines. Section~\ref{sec:numerical} provides numerical results validating the analytical findings. Finally, Section~\ref{sec:conclusion} concludes the paper.

\textit{Notations:} $J_0(\cdot)$ denotes the zeroth-order Bessel function of the first kind; $\mathbf{U}$, $\boldsymbol{\Lambda}$ represent the eigenvector and eigenvalue matrices; $\|\cdot\|$ denotes the Euclidean norm; $[x]^+ = \max(x,0)$; $E_1(\cdot)$ is the exponential integral; $\lceil\cdot\rceil$ is the ceiling function; $\E[\cdot]$ denotes expectation; $\Prob(\cdot)$ denotes probability; $\otimes$ denotes the Kronecker product; $K^*$ denotes the EDoF; $L$ is the KL truncation rank; $M$ is the number of FAMA users.

\vspace{-2mm}
\section{System Model and Preliminaries}\label{sec:system}
\subsection{FAS Channel Model}
Consider a single-antenna base station (BS) communicating with a user equipped with a one-dimensional FAS. The FAS consists of $N$ candidate port locations uniformly distributed within a linear aperture of physical length $W\lambda$, where $\lambda$ is the carrier wavelength and $W$ is the \emph{normalized aperture} measured in wavelengths. The $N$ ports are indexed $n = 1, \ldots, N$, and the spacing between adjacent ports is $\Delta d = W\lambda/(N-1)$.

Under rich isotropic scattering (i.e., the Jakes or Clarke model~\cite{Jakes74}), the channel coefficient at port~$n$ is modeled as a zero-mean circularly-symmetric complex Gaussian random variable. The channel vector across all $N$ ports is written as
\begin{align}\label{eq:channel_vec}
\mathbf{g} = [g_1, \ldots, g_N]^T \sim \mathcal{CN}(\mathbf{0}, \eta\mathbf{R}),
\end{align}
where $\eta$ denotes the mean channel power per port (incorporating path loss and shadowing) and $\mathbf{R} \in \mathbb{R}^{N\times N}$ is the spatial correlation matrix. The $(m,n)$-th entry of $\mathbf{R}$ is determined by the Jakes autocorrelation function:
\begin{align}\label{eq:jakes}
[\mathbf{R}]_{m,n} = J_0\!\left(\frac{2\pi W |m-n|}{N-1}\right).
\end{align}
Note that the diagonal entries satisfy $[\mathbf{R}]_{n,n} = J_0(0) = 1$ for all~$n$, meaning each port experiences unit-normalized fading. The off-diagonal entries oscillate and decay with increasing port separation, reflecting the oscillatory nature of $J_0(\cdot)$~\cite{FAS22}.

The FAS operates by selecting the port with the strongest instantaneous channel gain, yielding the post-selection SNR:
\begin{align}\label{eq:snr}
\gamma = \bar{\gamma} \times \max_{1 \leq n \leq N} \frac{|g_n|^2}{\eta},
\end{align}
where $\bar{\gamma} = P\eta / \sigma^2$ is the average transmit SNR, $P$ denotes the transmit power, and $\sigma^2$ is the noise power. The port selection mechanism in~\eqref{eq:snr} is mathematically equivalent to \emph{selection combining} over $N$ correlated Rayleigh branches.

\subsection{Outage Probability}
In delay-sensitive communications, the standard reliability measure is outage probability, defined as the probability that the post-selection SNR falls below a target threshold $\gamma_{\mathrm{th}}$:
\begin{align}\label{eq:pout_def}
P_{\mathrm{out}} = \Prob\!\left(\gamma \leq \gamma_{\mathrm{th}}\right) = F_{\max}\!\left(\frac{\gamma_{\mathrm{th}}}{\bar{\gamma}}\right),
\end{align}
where we define $x \triangleq \gamma_{\mathrm{th}}/\bar{\gamma}$ as the normalized threshold and
\begin{align}\label{eq:Fmax_def}
F_{\max}(x) = \Prob\!\left(\max_{1 \leq n \leq N} \frac{|g_n|^2}{\eta} \leq x\right)
\end{align}
is the CDF of the normalized maximum channel gain. Computing $F_{\max}(x)$ exactly requires evaluating the joint CDF of $N$ correlated exponential random variables, which involves an $N$-dimensional integral over the multivariate complex Gaussian density. For practical FAS deployments with $N$ in the tens or hundreds, this exact computation is intractable, and closed-form approximations become the only viable route.

\vspace{-2mm}
\subsection{KL Expansion}\label{subsec:kl}
The KL expansion represents correlated random processes through uncorrelated random variables, concentrating maximum energy into the fewest terms. This subsection develops the KL expansion for the FAS channel and establishes the properties needed for the EDoF approximation in Section~\ref{sec:edof}.

\subsubsection{Eigendecomposition of the Correlation Matrix}
Since $\mathbf{R}$ is a real, symmetric, positive semi-definite matrix, it admits the eigendecomposition, given by
\begin{align}\label{eq:eig_R}
\mathbf{R} = \mathbf{U}\boldsymbol{\Lambda}\mathbf{U}^H = \sum_{k=1}^{N} \lambda_k \mathbf{u}_k \mathbf{u}_k^H,
\end{align}
where $\boldsymbol{\Lambda} = \mathrm{diag}(\lambda_1, \ldots, \lambda_N)$ contains the eigenvalues in non-increasing order $\lambda_1 \geq \lambda_2 \geq \cdots \geq \lambda_N \geq 0$, and $\mathbf{U} = [\mathbf{u}_1, \ldots, \mathbf{u}_N]$ is the unitary matrix of eigenvectors. The trace constraint $\mathrm{tr}(\mathbf{R}) = N$ implies
\begin{align}\label{eq:trace}
\sum_{k=1}^{N} \lambda_k = N.
\end{align}
Physically, $\{\lambda_k\}$ represent the power distribution across the $N$ spatial eigenmodes: a large eigenvalue indicates a dominant mode that captures a significant portion of the total channel energy, while a small eigenvalue is a negligible mode.

\subsubsection{KL Representation of the Channel}
Using \eqref{eq:eig_R}, the channel vector~\eqref{eq:channel_vec} can be decomposed as
\begin{align}\label{eq:kl}
\mathbf{g} = \sqrt{\eta}\, \mathbf{U}\boldsymbol{\Lambda}^{1/2}\mathbf{z} = \sqrt{\eta} \sum_{k=1}^{N} \sqrt{\lambda_k} \mathbf{u}_k z_k,
\end{align}
where $\mathbf{z} = [z_1, \ldots, z_N]^T$ with $z_k \sim \mathcal{CN}(0,1)$ i.i.d.\ are the \emph{KL coefficients}. The channel at port~$n$ is thus expressed as
\begin{align}\label{eq:kl_port}
g_n = \sqrt{\eta} \sum_{k=1}^{N} \sqrt{\lambda_k}\, u_{n,k}\, z_k,
\end{align}
where $u_{n,k} = [\mathbf{U}]_{n,k}$ is the $(n,k)$-th entry of the eigenvector matrix. This is the KL expansion: it transforms the $N$ correlated channel coefficients $\{g_n\}$ into a linear combination of $N$ \emph{independent} random variables $\{z_k\}$, weighted by the eigenvalues $\{\lambda_k\}$ and the eigenvector entries $\{u_{n,k}\}$.

\subsubsection{Truncated KL Expansion}
One advantage of the KL expansion is that it supports principled dimensionality reduction. Retaining only the $L$ most significant eigenmodes, gives the rank-$L$ truncated approximation:
\begin{align}\label{eq:kl_trunc}
\tilde{g}_n^{(L)} = \sqrt{\eta} \sum_{k=1}^{L} \sqrt{\lambda_k}\, u_{n,k}\, z_k.
\end{align}
The mean-square approximation error per port is computed as follows. The truncation error at port~$n$ is given by
\begin{align}
g_n - \tilde{g}_n^{(L)} = \sqrt{\eta}\sum_{k=L+1}^{N}\sqrt{\lambda_k}\,u_{n,k}\,z_k.
\end{align}
Taking the expectation of the squared magnitude and using the independence of $\{z_k\}$ with $\E[|z_k|^2]=1$:
\begin{align}
\E\!\left[|g_n - \tilde{g}_n^{(L)}|^2\right] &= \eta\sum_{k=L+1}^{N}\lambda_k\,|u_{n,k}|^2.
\end{align}
Averaging over all $N$ ports and using the orthonormality of eigenvectors ($\sum_{n=1}^N |u_{n,k}|^2 = 1$), we have
\begin{align}\label{eq:kl_error}
\frac{1}{N}\sum_{n=1}^{N} \E\!\left[|g_n - \tilde{g}_n^{(L)}|^2\right] = \frac{\eta}{N}\sum_{k=L+1}^{N} \lambda_k.
\end{align}
This error equals the sum of the discarded eigenvalues normalized by $N$. Among all rank-$L$ linear representations, the KL expansion minimizes this error, and no other rank-$L$ scheme compresses the channel more accurately. If $L \geq K^*$, the discarded eigenvalues are exponentially small and the truncation error becomes negligible. Anderson's inequality~\cite{Anderson55} guarantees that the outage computed from \eqref{eq:kl_trunc} is always an \emph{upper bound} on the true outage from~\eqref{eq:kl_port}. 

\subsubsection{Spectral Concentration and the Effective Number of Modes}
The critical question is: \emph{how many eigenmodes $L$ must be retained for an accurate approximation?} This question has an answer rooted in the classical theory of bandlimited signals.

The Jakes correlation function~\eqref{eq:jakes} is a sampled version of the continuous kernel $J_0(2\pi W \tau)$, $\tau \in [0,1]$, a \emph{bandlimited} function with effective bandwidth proportional to $W$. The Slepian-Landau-Pollak theorem~\cite{Slepian61,Landau62} shows that the eigenvalues of such a kernel undergo a sharp \emph{phase transition}: that is to say, the first $K^*$ eigenvalues are approximately equal and of order $N/K^*$, while the remaining ones decay exponentially to zero, where
\begin{align}\label{eq:Kstar}
K^* = 2\lceil W \rceil + 1
\end{align}
is the number of \emph{significant} eigenvalues. The transition is sharp, and there is essentially a ``cliff'' in the eigenvalue spectrum at index $K^*$, with $\lambda_k = \Theta(N/K^*)$ for $k \leq K^*$ and $\lambda_k \to 0$ exponentially for $k > K^*$.

The physical implication is direct: regardless of how many ports $N$ the FAS deploys, the channel is well approximated within a $K^*$-dimensional subspace set by aperture size $W$ alone. In other words, ports added beyond $N \gg K^*$ do not create new degrees of freedom, and they merely oversample the same $K^*$ spatial modes. 

\vspace{-2mm}
\section{EDoF Approximation}\label{sec:edof}
Here, we  use that fact to replace the intractable outage of $N$ correlated ports with the tractable outage of $K^*$ independent branches, where $K^*$ is set by aperture size~$W$ alone.

\vspace{-2mm}
\subsection{From Correlated Ports to Effective Independent Branches}
Under the KL expansion~\eqref{eq:kl_port}, the normalized channel gain at port $n$ can be written as
\begin{align}\label{eq:gain_port}
X_n \triangleq \frac{|g_n|^2}{\eta} = \left|\sum_{k=1}^{N} \sqrt{\lambda_k}\, u_{n,k}\, z_k\right|^2.
\end{align}
Each $X_n$ is the squared magnitude of a weighted sum of independent complex Gaussian variables. Notably, every port has \emph{unit-mean} Rayleigh fading regardless of the correlation structure. To see this, note that $X_n = |h_n|^2$ where $h_n = \sum_{k=1}^{N}\sqrt{\lambda_k}\,u_{n,k}\,z_k$ is a linear combination of independent $\mathcal{CN}(0,1)$ variables. Since a linear combination of independent complex Gaussians is again complex Gaussian, $h_n \sim \mathcal{CN}(0, \sigma_n^2)$ with variance
\begin{align}
\sigma_n^2 = \sum_{k=1}^{N} \lambda_k |u_{n,k}|^2.
\end{align}
To evaluate $\sigma_n^2$, recall the eigendecomposition $\mathbf{R} = \mathbf{U}\boldsymbol{\Lambda}\mathbf{U}^H$, so the $(n,n)$-th diagonal entry is
\begin{align}
[\mathbf{R}]_{n,n} = \sum_{k=1}^{N}[\mathbf{U}]_{n,k}\,\lambda_k\,[\mathbf{U}^H]_{k,n} = \sum_{k=1}^{N}\lambda_k\,|u_{n,k}|^2 = \sigma_n^2.
\end{align}
Since $[\mathbf{R}]_{n,n} = J_0(0) = 1$ for all $n$, we conclude that $\sigma_n^2 = 1$ and hence,
\begin{align}\label{eq:unit_mean}
\E[X_n] = \E[|h_n|^2] = \sigma_n^2 = 1,~\forall n.
\end{align}
That is, $X_n \sim \mathrm{Exp}(1)$ marginally, so the marginal outage for any \emph{single} port is simply $\Prob(X_n \leq x) = 1 - e^{-x}$.

By definition, FAS always chooses the best port. Thus, the outage event is $\{\max_n X_n \leq x\}$ with $x = \gamma_{\mathrm{th}}/\bar{\gamma}$. If the $N$ ports were \emph{independent}, then the outage probability would be $(1-e^{-x})^N$, negligible for large~$N$. Spatial correlation through the Jakes kernel~\eqref{eq:jakes} breaks this, and the effective independent degrees of freedom is not $N$ but $K^*$, as shown by the spectral concentration theorem in Section~\ref{subsec:kl}.

The first $K^*$ eigenvalues of $\mathbf{R}$ account for nearly all the total power $N$, while the rest are negligible; the FAS channel is well approximated within a $K^*$-dimensional subspace, and the correlated maximum $\max_n X_n$ scales distributionally like the maximum of $K^*$ independent variables. 

\begin{definition}[EDoF Approximation]\label{def:edof}
The EDoF outage approximation models the FAS as selection combining over $K^*$ independent Rayleigh branches, giving
\begin{align}\label{eq:edof}
P_{\mathrm{out}}^{\mathrm{EDoF}}(x) = \left(1 - e^{-x}\right)^{K^*},
\end{align}
where $K^* = 2\lceil W\rceil + 1$ and $x = \gamma_{\mathrm{th}}/\bar{\gamma}$.
\end{definition}

\begin{remark}[Physical Interpretation]
The EDoF formula~\eqref{eq:edof} states that a FAS with normalized aperture $W$ behaves as a \emph{$K^*$-branch selection combining} system with i.i.d.\ Rayleigh channels of unit mean power. The spatial correlation reduces the $N$ physical ports to $K^*$ effective independent branches, but each branch retains the full unit-mean power.
\end{remark}

\begin{remark}[Computational Complexity]
Evaluating~\eqref{eq:edof} requires only $\lceil W \rceil$ and a single exponential, and therefore has complexity of $O(1)$, with no eigendecomposition, no matrix operations, and no numerical integration. The formula is therefore well-suited to system-level optimization, link budget analysis, and real-time resource allocation.
\end{remark}

\vspace{-2mm}
\subsection{Justification via KL Spectral Concentration}

\begin{proposition}[Spectral Dimension Reduction]\label{prop:dimreduce}
Let $\tilde{X}_n = |\sum_{k=1}^{K^*} \sqrt{\lambda_k}\,u_{n,k}\,z_k|^2$ be the $K^*$-truncated version of the normalized port gain $X_n$ defined in~\eqref{eq:gain_port}. Then:
\begin{enumerate}
\item Each $\tilde{X}_n$ is the squared magnitude of a complex Gaussian variable with variance $\sum_{k=1}^{K^*}\lambda_k|u_{n,k}|^2 \approx 1$. Hence $\tilde{X}_n$ is approximately $\mathrm{Exp}(1)$ marginally.
\item The correlation between $\tilde{X}_m$ and $\tilde{X}_n$ is governed by $\dfrac{|\sum_k \lambda_k u_{m,k}^* u_{n,k}|^2}{(\sum_k \lambda_k |u_{m,k}|^2)(\sum_k \lambda_k |u_{n,k}|^2)}$, which decays for well-separated ports.
\item The $\max_n \tilde{X}_n$ has $K^*$ EDoF, since the $K^*$ independent modes $\{z_k\}$ are the only source of randomness.
\end{enumerate}

The EDoF approximation replaces the correlated maximum of $N$ approximately $\mathrm{Exp}(1)$ variables with the maximum of $K^*$ i.i.d.\ $\mathrm{Exp}(1)$ variables, preserving both the marginal distribution and the effective dimensionality.
\end{proposition}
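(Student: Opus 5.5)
The plan is to verify the three items in order. The main tools are the KL representation \eqref{eq:kl_port}, the diagonal identity $\sum_{k=1}^N\lambda_k|u_{n,k}|^2=[\mathbf{R}]_{n,n}=1$ from \eqref{eq:unit_mean}, and the spectral concentration of Section~\ref{subsec:kl}.

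\textbf{Item 1.} Set $\tilde h_n=\sum_{k=1}^{K^*}\sqrt{\lambda_k}u_{n,k}z_k$. Because the $z_k$ are i.i.d.\ $\mathcal{CN}(0,1)$, $\tilde h_n$ is circularly-symmetric complex Gaussian with variance $\tilde\sigma_n^2=\sum_{k\le K^*}\lambda_k|u_{n,k}|^2$. Hence $\tilde X_n=|\tilde h_n|^2\sim\mathrm{Exp}$ with mean $\tilde\sigma_n^2$. Next, I would write $\tilde\sigma_n^2=1-\epsilon_n$ with $\epsilon_n=\sum_{k>K^*}\lambda_k|u_{n,k}|^2\ge 0$. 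Averaging over ports and using \eqref{eq:kl_error} gives $\frac1N\sum_n\epsilon_n=\frac1N\sum_{k>K^*}\lambda_k$. This is exponentially small by the concentration statement attached to \eqref{eq:Kstar}.

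A uniform-in-$n$ bound needs a little more. I would use $|u_{n,k}|^2\le 1$, which gives $\epsilon_n\le\sum_{k>K^*}\lambda_k$. That quantity is the tail mass. It is small relative to $N$, but it is not necessarily small in absolute terms. A better route is to invoke the continuum limit: for large $N$, $\sqrt{N}u_{n,k}$ approximates a bounded prolate-type eigenfunction, so $|u_{n,k}|^2=O(1/N)$. Then $\epsilon_n=O(\frac1N\sum_{k>K^*}\lambda_k)$ uniformly in $n$, which justifies ``$\approx 1$''. The CDF then satisfies $|(1-e^{-x/\tilde\sigma_n^2})-(1-e^{-x})|\le \epsilon_n x e^{-x}/(1-\epsilon_n)$ by the mean value theorem.

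\textbf{Item 2.} I would use the standard fact that for jointly complex Gaussian $(\tilde h_m,\tilde h_n)$ the correlation coefficient of $|\tilde h_m|^2,|\tilde h_n|^2$ equals $|\E[\tilde h_m^*\tilde h_n]|^2/(\tilde\sigma_m^2\tilde\sigma_n^2)$. This follows from Isserlis' theorem: $\E[|a|^2|b|^2]=\E|a|^2\E|b|^2+|\E[a^*b]|^2$. With $\E[\tilde h_m^*\tilde h_n]=\sum_{k\le K^*}\lambda_k u^*_{m,k}u_{n,k}$, the stated formula follows. For the decay, I would note that the untruncated numerator is $[\mathbf{R}]_{m,n}=J_0(2\pi W|m-n|/(N-1))$, which decays like $|m-n|^{-1/2}$. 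The truncated cross term differs from it by at most $\sqrt{\epsilon_m\epsilon_n}$ (Cauchy--Schwarz), so it inherits the decay up to the exponentially small error.

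\textbf{Item 3.} The key observation is that $(\tilde X_1,\dots,\tilde X_N)$ is a measurable function of $(z_1,\dots,z_{K^*})$ alone. So $\max_n\tilde X_n=\|\mathbf{A}\mathbf z_{K^*}\|_\infty^2$, where $\mathbf{A}=\mathbf{U}_{K^*}\boldsymbol\Lambda_{K^*}^{1/2}$ has rank $K^*$. The outage event $\{\max_n\tilde X_n\le x\}$ is then the event that $\mathbf z_{K^*}$ lies in a bounded convex set. For a rank-$K^*$ linear map, this set has volume $\Theta(x^{K^*})$ as $x\to0$, because $\mathbf A$ is injective on $\mathbb{C}^{K^*}$. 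Since the density of $\mathbf z_{K^*}$ is bounded and positive near the origin, the probability scales as $x^{K^*}$, which is the same order as $(1-e^{-x})^{K^*}$. This makes ``$K^*$ degrees of freedom'' precise as a diversity order.

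I expect the main obstacle to be the uniform smallness of $\epsilon_n$ in item 1. It relies on delocalization of the eigenvectors, and that is only asymptotically available through the Slepian eigenfunction theory. I would therefore phrase item 1 as approximate, with the averaged bound as the rigorous part and the per-port bound as the continuum-limit heuristic. The concluding sentence of the proposition is interpretive. It follows by combining items 1 and 3, together with the choice of equal unit means in Definition~\ref{def:edof}.
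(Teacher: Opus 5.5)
Your proposal is correct and follows the same line as the paper, which gives no separate proof but justifies the three items inline: the truncated-KL variance, the squared-Gaussian correlation formula, and the observation that only $z_1,\dots,z_{K^*}$ carry randomness. Your item~3 makes that last point precise with exactly the $\mathbf z=\sqrt{x}\,\mathbf w$ volume-scaling argument the paper uses in the proof of Theorem~\ref{thm:asymptotic}. One simplification for item~1: row-orthonormality of $\mathbf U$ gives $\epsilon_n\le\lambda_{K^*+1}\sum_{k>K^*}|u_{n,k}|^2\le\lambda_{K^*+1}$ uniformly in $n$, which is already small under the paper's literal premise that $\lambda_k\to0$ for $k>K^*$, so your delocalization heuristic is needed only if you read that premise as smallness relative to $N$.
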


\vspace{-2mm}
\subsection{Diversity Order Analysis}

\begin{theorem}[Diversity Order]\label{thm:diversity}
The EDoF approximation~\eqref{eq:edof} achieves diversity order $K^*$:
\begin{align}\label{eq:div_order}
d^{\mathrm{EDoF}} \triangleq -\lim_{\bar{\gamma}\to\infty} \frac{\log P_{\mathrm{out}}^{\mathrm{EDoF}}}{\log \bar{\gamma}} = K^*.
\end{align}
\end{theorem}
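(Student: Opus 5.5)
The argument is a direct asymptotic evaluation of the closed form in Definition~\ref{def:edof}; no property of $\mathbf{R}$ beyond the fact that $K^*$ is a fixed positive integer is needed. Write $x=\gamma_{\mathrm{th}}/\bar{\gamma}$ with $\gamma_{\mathrm{th}}>0$ fixed, so that $x\to0^+$ as $\bar{\gamma}\to\infty$. The plan is to show that $P_{\mathrm{out}}^{\mathrm{EDoF}}$ behaves like $(\gamma_{\mathrm{th}}/\bar{\gamma})^{K^*}$ up to a factor tending to one. The log-ratio in \eqref{eq:div_order} then converges to $-K^*$.

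\emph{Step 1.} Bound $1-e^{-x}$ two-sidedly. For $x\ge 0$ we have $x-x^2/2\le 1-e^{-x}\le x$, so for $0<x<2$,
\begin{align}
\log x+\log\!\left(1-\tfrac{x}{2}\right)\le \log\!\left(1-e^{-x}\right)\le \log x .
\end{align}
Equivalently, $1-e^{-x}=x(1+o(1))$ as $x\to0$.

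\emph{Step 2.} Raise to the power $K^*$ and substitute $x=\gamma_{\mathrm{th}}/\bar{\gamma}$. This gives
\begin{align}
\log P_{\mathrm{out}}^{\mathrm{EDoF}} = K^*\log\gamma_{\mathrm{th}} - K^*\log\bar{\gamma} + K^*\log\!\left(1+o(1)\right).
\end{align}

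\emph{Step 3.} Divide by $\log\bar{\gamma}$ and let $\bar{\gamma}\to\infty$. The terms $K^*\log\gamma_{\mathrm{th}}$ and $K^*\log(1+o(1))$ are bounded, so after division they tend to zero. Hence the ratio tends to $-K^*$, and negating gives $d^{\mathrm{EDoF}}=K^*$.

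The only point requiring care is the error term in Step~2: it must remain bounded, so that dividing by $\log\bar{\gamma}$ sends it to zero. The explicit sandwich in Step~1 settles this. As a by-product, the argument shows $P_{\mathrm{out}}^{\mathrm{EDoF}}\sim(\gamma_{\mathrm{th}}/\bar{\gamma})^{K^*}$, i.e.\ unit coding gain relative to $K^*$-branch i.i.d.\ selection combining. This asymptotic may be useful later when comparing with the exact outage.
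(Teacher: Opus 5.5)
Your proof is correct and takes the same approach as the paper: both use $1-e^{-x}\sim x$ as $x=\gamma_{\mathrm{th}}/\bar{\gamma}\to 0$ to obtain $P_{\mathrm{out}}^{\mathrm{EDoF}}\sim(\gamma_{\mathrm{th}}/\bar{\gamma})^{K^*}$. Your sandwich $x-x^2/2\le 1-e^{-x}\le x$ makes explicit the step from this asymptotic equivalence to the limit of the log-ratio, which the paper leaves implicit.
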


\begin{proof}
Let $x = \gamma_{\mathrm{th}}/\bar{\gamma}$. As $\bar{\gamma} \to \infty$, $x \to 0$, and
\begin{align}
1 - e^{-x} = x - \frac{x^2}{2} + \cdots \sim x.
\end{align}
Therefore,
\begin{align}
P_{\mathrm{out}}^{\mathrm{EDoF}} = (1 - e^{-x})^{K^*} \sim x^{K^*} = \left(\frac{\gamma_{\mathrm{th}}}{\bar{\gamma}}\right)^{K^*},
\end{align}
which gives $d^{\mathrm{EDoF}} = K^* = 2\lceil W\rceil + 1$.
\end{proof}

\begin{corollary}[Coding Gain]\label{cor:coding_gain}
At high SNR, the EDoF outage behaves as
\begin{align}\label{eq:coding_gain}
P_{\mathrm{out}}^{\mathrm{EDoF}} \sim \left(\frac{\gamma_{\mathrm{th}}}{\bar{\gamma}}\right)^{K^*}.
\end{align}
The coding gain factor is unity (i.e., $P_{\mathrm{out}}^{\mathrm{EDoF}} \sim x^{K^*}$ with no multiplicative constant), reflecting the fact that each effective branch has unit mean power.
\end{corollary}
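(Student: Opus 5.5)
The plan is to prove Corollary~\ref{cor:coding_gain} by sharpening the first-order expansion already used for Theorem~\ref{thm:diversity} into an exact asymptotic equivalence, meaning a ratio tending to one, rather than only a statement about logarithmic slopes. Throughout, write $x=\gamma_{\mathrm{th}}/\bar{\gamma}$ with $\gamma_{\mathrm{th}}>0$ fixed, so that $\bar{\gamma}\to\infty$ is the same limit as $x\to 0^+$. The target is
\begin{align}
\lim_{x\to 0^+}\frac{P_{\mathrm{out}}^{\mathrm{EDoF}}(x)}{x^{K^*}}=\lim_{x\to 0^+}\left(\frac{1-e^{-x}}{x}\right)^{K^*}=1.
\end{align}

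First I will handle the scalar ratio $(1-e^{-x})/x$. Since $e^{-x}$ is smooth, L'H\^{o}pital's rule (or the Taylor series $1-e^{-x}=x-x^2/2+O(x^3)$) gives $(1-e^{-x})/x\to 1$. For a quantitative version, I will use the elementary bounds $x-x^2/2\le 1-e^{-x}\le x$, valid for $x\ge 0$. They follow from integrating $1-t\le e^{-t}\le 1$ over $[0,x]$.

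Second, because $K^*=2\lceil W\rceil+1$ is a fixed finite integer that does not depend on $\bar{\gamma}$, continuity of $t\mapsto t^{K^*}$ carries the ratio limit through the power, so the ratio tends to one. The bounds above also give the non-asymptotic sandwich
\begin{align}
(1-x/2)^{K^*}x^{K^*}\le P_{\mathrm{out}}^{\mathrm{EDoF}}(x)\le x^{K^*}.
\end{align}
By Bernoulli's inequality, the relative error is therefore at most $K^*x/2$. This shows that the leading constant is exactly $1$: the outage curve has coding gain $1$ and behaves as $(\gamma_{\mathrm{th}}/\bar{\gamma})^{K^*}$.

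Finally, I will connect the unit constant to the model, as the remark in the statement does. The CDF of each branch is $1-e^{-x}$ with density $1$ at the origin, which is the reciprocal of the unit mean established in~\eqref{eq:unit_mean}. A branch with mean $\mu$ would instead contribute the factor $x/\mu$.

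There is no real obstacle here. The only point needing care is stating precisely what ``$\sim$'' and ``coding gain factor'' mean, namely a ratio limit of one, and not just the equal slopes already supplied by Theorem~\ref{thm:diversity}. The explicit sandwich above settles that point.
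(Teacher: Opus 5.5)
Your proposal is correct and takes essentially the paper's route. The paper gives no separate proof; it relies on the expansion $1-e^{-x}=x-x^2/2+\cdots\sim x$ from the proof of Theorem~\ref{thm:diversity}, raised to the fixed power $K^*$, which is exactly your ratio-limit argument. Your sandwich $(1-x/2)^{K^*}x^{K^*}\le P_{\mathrm{out}}^{\mathrm{EDoF}}(x)\le x^{K^*}$, giving relative error at most $K^*x/2$, is a valid quantitative strengthening that the paper does not state.
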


\vspace{-2mm}
\subsection{High-SNR Asymptotic Tightness}

\begin{theorem}[Asymptotic Exactness]\label{thm:asymptotic}
Let $P_{\mathrm{out}}^{\mathrm{exact}}$ denote the true outage probability. Then, we have
\begin{align}\label{eq:asymptotic}
\lim_{\bar{\gamma}\to\infty} \frac{P_{\mathrm{out}}^{\mathrm{EDoF}}}{P_{\mathrm{out}}^{\mathrm{exact}}} = \xi_{K^*},
\end{align}
where $\xi_{K^*} > 0$ is a finite constant depending on $\{\lambda_k, \mathbf{u}_k\}_{k=1}^{K^*}$. Both expressions share the same $\bar{\gamma}^{-K^*}$ scaling.
\end{theorem}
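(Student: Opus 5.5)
The plan is to reduce the exact outage to a Gaussian small-ball probability in the $K^*$-dimensional KL coordinate space, and then read off its leading-order behaviour as $x\to 0$. Throughout I will use the spectral concentration of Section~\ref{subsec:kl}, modelling the channel by its rank-$K^*$ KL truncation~\eqref{eq:kl_trunc} with $L=K^*$. This is the sense in which "exact" has to be read in the statement. The Jakes matrix with $N$ distinct ports is generically full rank, so the literal $N$-port outage would decay like $x^{\mathrm{rank}(\mathbf{R})}$. The claimed common $\bar{\gamma}^{-K^*}$ scaling therefore holds only once the exponentially small tail eigenvalues $\lambda_k$, $k>K^*$, are treated as zero. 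I will state this modelling step explicitly at the start, since it is the main conceptual obstacle. Under this model, Anderson's inequality gives the ordering $P_{\mathrm{out}}^{\mathrm{exact}}\le$ the truncated-model outage.

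\textbf{Step 1: rewrite the outage as a Gaussian integral over a scaled set.} Write $\mathbf{a}_n=[\sqrt{\lambda_1}u_{n,1},\ldots,\sqrt{\lambda_{K^*}}u_{n,K^*}]^H\in\mathbb{C}^{K^*}$, so that $X_n=|\mathbf{a}_n^H\mathbf{z}|^2$ with $\mathbf{z}\sim\mathcal{CN}(\mathbf{0},\mathbf{I}_{K^*})$. The outage event is $\{\max_n|\mathbf{a}_n^H\mathbf{z}|\le\sqrt{x}\}=\{\mathbf{z}\in\sqrt{x}\,\mathcal{C}\}$, where
\begin{align}
\mathcal{C}=\{\mathbf{z}\in\mathbb{C}^{K^*}:\ |\mathbf{a}_n^H\mathbf{z}|\le 1,\ n=1,\ldots,N\}.
\end{align}
$\mathcal{C}$ is convex, compact and symmetric provided $\{\mathbf{a}_n\}$ spans $\mathbb{C}^{K^*}$. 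This holds because $\sum_n\mathbf{a}_n\mathbf{a}_n^H=\mathrm{diag}(\lambda_1,\ldots,\lambda_{K^*})$, which is positive definite since the first $K^*$ eigenvalues are $\Theta(N/K^*)>0$. It follows that
\begin{align}
P_{\mathrm{out}}^{\mathrm{exact}}=\pi^{-K^*}\int_{\sqrt{x}\,\mathcal{C}}e^{-\|\mathbf{z}\|^2}\,d\mathbf{z}.
\end{align}

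\textbf{Step 2: leading-order asymptotics.} Substitute $\mathbf{z}=\sqrt{x}\,\mathbf{w}$. The Jacobian over $\mathbb{R}^{2K^*}$ is $x^{K^*}$, so
\begin{align}
P_{\mathrm{out}}^{\mathrm{exact}}=x^{K^*}\pi^{-K^*}\int_{\mathcal{C}}e^{-x\|\mathbf{w}\|^2}\,d\mathbf{w}.
\end{align}
Because $\mathcal{C}$ is bounded, dominated convergence gives $P_{\mathrm{out}}^{\mathrm{exact}}\sim\pi^{-K^*}\mathrm{vol}(\mathcal{C})\,x^{K^*}$. Combining this with Corollary~\ref{cor:coding_gain}, namely $P_{\mathrm{out}}^{\mathrm{EDoF}}\sim x^{K^*}$, yields
\begin{align}
\xi_{K^*}=\frac{\pi^{K^*}}{\mathrm{vol}(\mathcal{C})}.
\end{align}
This constant is finite and positive, and it depends only on $\{\lambda_k,\mathbf{u}_k\}_{k=1}^{K^*}$ through $\mathcal{C}$. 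Both outages therefore share the $\bar{\gamma}^{-K^*}$ scaling.

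\textbf{Step 3: sanity check and remaining difficulty.} As a consistency check, take the case where $K^*$ of the $\mathbf{a}_n$ are orthonormal and the remaining ones lie inside the resulting polydisc. Then $\mathcal{C}$ is a product of unit discs, $\mathrm{vol}(\mathcal{C})=\pi^{K^*}$, and $\xi_{K^*}=1$, which recovers i.i.d.\ selection combining. Beyond Step~0's modelling caveat, the only technical point is boundedness of $\mathcal{C}$, which follows from the spanning argument in Step~1. I would not attempt a closed form for $\mathrm{vol}(\mathcal{C})$; the statement only asks that the limit be a finite positive constant.
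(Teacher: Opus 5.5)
Your argument is the paper's proof. You use the same rank-$K^*$ truncation, the same substitution $\mathbf{z}=\sqrt{x}\,\mathbf{w}$ with Jacobian $x^{K^*}$, and obtain $\xi_{K^*}=\pi^{K^*}/\mathrm{vol}(\mathcal{C})$, where $\mathcal{C}$ is the paper's set $\mathcal{S}$. Your identity $\sum_n\mathbf{a}_n\mathbf{a}_n^H=\mathrm{diag}(\lambda_1,\ldots,\lambda_{K^*})$ usefully proves the boundedness that the paper only asserts. There is a minor slip in your Step~3 check: in the orthonormal coordinates, an extra constraint is redundant iff $\sum_k|a_{n,k}|\le 1$, not iff $\mathbf{a}_n$ lies in the polydisc.

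Your up-front caveat is correct and more accurate than the paper, which claims the tail modes are negligible ``to leading order as $x\to0$''. The Jakes matrix is positive definite for distinct ports, so the literal $N$-port outage behaves like $x^{N}/\det\mathbf{R}$. The stated limit therefore holds only for the rank-$K^*$ truncated model, exactly as you say.
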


\begin{proof}
Since $\lambda_k \to 0$ exponentially for $k > K^*$, the truncation error is negligible and the true outage equals the $K^*$-mode approximation to leading order as $x \to 0$. Working with the $K^*$-truncated expansion, we get
\begin{align}
P_{\mathrm{out}}^{\mathrm{exact}}(x) &\approx \Prob\!\left(\max_{1\leq n\leq N} \left|\sum_{k=1}^{K^*}\sqrt{\lambda_k}\,u_{n,k}\,z_k\right|^2 \leq x\right),
\end{align}
where $z_k \sim \mathcal{CN}(0,1)$ are i.i.d. Now, write $z_k = z_k^R + j\,z_k^I$ with $z_k^R, z_k^I \sim \mathcal{N}(0, 1/2)$. The joint probability density function (PDF) of $\mathbf{z} = (z_1,\ldots,z_{K^*})^T$ is $f(\mathbf{z}) = \pi^{-K^*}e^{-\|\mathbf{z}\|^2}$. Introduce the substitution $z_k = \sqrt{x}\,w_k$, so $\mathbf{z} = \sqrt{x}\,\mathbf{w}$. The Jacobian of this real $2K^*$-dimensional change of variables is $|d\mathbf{z}/d\mathbf{w}| = x^{K^*}$ (since each of the $2K^*$ real components is scaled by $\sqrt{x}$, giving $(\sqrt{x})^{2K^*} = x^{K^*}$). Under this substitution, the constraint $|\sum_k\sqrt{\lambda_k}\,u_{n,k}\,z_k|^2 \leq x$ becomes $|\sum_k\sqrt{\lambda_k}\,u_{n,k}\,w_k|^2 \leq 1$, and the joint PDF transforms as $f(\sqrt{x}\,\mathbf{w}) = \pi^{-K^*}e^{-x\|\mathbf{w}\|^2}$. Therefore, we have
\begin{align}
P_{\mathrm{out}}^{\mathrm{exact}}(x) &= x^{K^*}\!\int_{\mathcal{S}} \pi^{-K^*}e^{-x\|\mathbf{w}\|^2}\,d\mathbf{w},
\end{align}
where $\mathcal{S} = \{\mathbf{w}: \max_n |\sum_k\sqrt{\lambda_k}\,u_{n,k}\,w_k|^2 \leq 1\}$ is an $x$-independent bounded region determined by the eigenvalues and eigenvectors. As $x \to 0$, $e^{-x\|\mathbf{w}\|^2} \to 1$ uniformly on the bounded set $\mathcal{S}$, so we have
\begin{align}
P_{\mathrm{out}}^{\mathrm{exact}}(x) &\sim \alpha_{K^*}\, x^{K^*},~x \to 0,
\end{align}
where $\alpha_{K^*} \triangleq \pi^{-K^*}\mathrm{vol}(\mathcal{S}) > 0$ is a finite positive constant that depends on $\{\lambda_k, \mathbf{u}_k\}_{k=1}^{K^*}$ through the geometry of the set $\mathcal{S}$. Since $P_{\mathrm{out}}^{\mathrm{EDoF}} \sim x^{K^*}$, we conclude
\begin{align}
\xi_{K^*} = \lim_{x\to 0}\frac{P_{\mathrm{out}}^{\mathrm{EDoF}}(x)}{P_{\mathrm{out}}^{\mathrm{exact}}(x)} = \frac{1}{\alpha_{K^*}}.
\end{align}
Both expressions share the same $x^{K^*} = (\gamma_{\mathrm{th}}/\bar\gamma)^{K^*}$ scaling, confirming $\xi_{K^*}$ is finite and positive.
\end{proof}

\begin{remark}[Tightness in Practice]
For Jakes correlation with $W \leq 5$, simulations give $\xi_{K^*} \in [1, 4]$ (see Section~\ref{sec:numerical}). Since $\xi_{K^*} \geq 1$, the EDoF formula never underestimates the true outage. The dB gap is $10\log_{10}(\xi_{K^*})/K^*$ per diversity branch and stays fixed as SNR grows.
\end{remark}

\vspace{-2mm}
\section{Refined WIM with Normalized Eigenvalues}\label{sec:refined}
The EDoF approximation in Section~\ref{sec:edof} assigns equal unit-mean power to all $K^*$ modes. This gives the simplest closed-form expression and preserves the correct diversity order, but ignores the fact that eigenvalues within the dominant subspace ($k \leq K^*$) are not uniform; the first few are noticeably larger than the later ones. That spread makes EDoF somewhat more conservative than necessary at moderate SNR. This section introduces a \emph{refined} approximation that incorporates the individual eigenvalue magnitudes, tightening the bound while keeping a closed-form product structure.

\vspace{-2mm}
\subsection{Normalized Eigenvalue Model}

\begin{definition}[Normalized Eigenvalues]\label{def:beta}
Define the \emph{normalized eigenvalues} as
\begin{align}\label{eq:beta_def}
\beta_k \triangleq \frac{\lambda_k K^*}{N},~k = 1, \ldots, K^*.
\end{align}
These satisfy $\sum_{k=1}^{K^*} \beta_k \approx K^*$ (since $\sum_{k=1}^{K^*}\lambda_k \approx N$), so $\beta_k$ represents the relative weight of eigenmode $k$ normalized such that the mean value is approximately unity.
\end{definition}

\begin{proposition}[Refined WIM Approximation]\label{thm:refined}
The \emph{refined WIM} outage approximation uses the structure
\begin{align}\label{eq:rwim}
P_{\mathrm{out}}^{\mathrm{Ref}}(x) = \prod_{k=1}^{K^*}\left(1 - e^{-x/\beta_k}\right),
\end{align}
where $\beta_k = \lambda_k K^*/N$ and $x = \gamma_{\mathrm{th}}/\bar{\gamma}$.
\end{proposition}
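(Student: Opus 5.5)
The ``proposition'' is a modeling statement. What has to be shown is that \eqref{eq:rwim} is the outage of a well-defined surrogate model, and that it has the structural properties claimed in the introduction.

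I would first build the surrogate explicitly. By Proposition~\ref{prop:dimreduce}, after truncation to $K^*$ modes the only randomness is the vector of independent KL coefficients $z_1,\ldots,z_{K^*}$. Mode $k$ carries power $\lambda_k$ out of the total $\sum_{k\le K^*}\lambda_k\approx N$. The surrogate assigns to each mode one effective branch with gain $Y_k=\beta_k|z_k|^2$, where $\beta_k=\lambda_k K^*/N$ as in Definition~\ref{def:beta}. This normalization gives the mean branch power $\frac{1}{K^*}\sum_k\E[Y_k]=\frac{1}{K^*}\sum_k\beta_k\approx 1$, which matches the unit-mean marginal \eqref{eq:unit_mean} in the same way that EDoF does.

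With the surrogate in place, the formula follows by a direct calculation. Since $|z_k|^2\sim\mathrm{Exp}(1)$ are independent, we have $\Prob(Y_k\le x)=1-e^{-x/\beta_k}$. Selection combining over the $K^*$ independent branches then gives
\begin{align*}
\Prob\Big(\max_k Y_k\le x\Big)=\prod_{k=1}^{K^*}\big(1-e^{-x/\beta_k}\big),
\end{align*}
which is \eqref{eq:rwim}.

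Next I would check the unification and diversity claims. If all $\lambda_k=N/K^*$, then $\beta_k=1$ and \eqref{eq:rwim} collapses to \eqref{eq:edof}. As $x\to0$, each factor behaves like $x/\beta_k$, so
\begin{align*}
P^{\mathrm{Ref}}_{\mathrm{out}}\sim \frac{x^{K^*}}{\prod_k\beta_k}.
\end{align*}
Following the proof of Theorem~\ref{thm:diversity}, this gives diversity order $K^*$ with coding constant $1/\prod_k\beta_k$, which is positive because $\lambda_{K^*}>0$. I would also compare this with Theorem~\ref{thm:asymptotic}. The constant $1/\prod_k\beta_k=(N/K^*)^{K^*}/\det(\boldsymbol\Lambda_{K^*})$, and by the AM--GM inequality it is at most $1$ when $\sum_k\beta_k=K^*$. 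This explains why WIM is tighter than EDoF at moderate and high SNR.

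The main obstacle is justifying the surrogate, namely replacing $\max_n|\sum_k\sqrt{\lambda_k}u_{n,k}z_k|^2$ by $\max_k\beta_k|z_k|^2$. These are not the same random variable, and no exact identity links them. I would not attempt an equality. The plan is to argue heuristically from the spectral concentration in Section~\ref{subsec:kl}: the $K^*$ dominant modes are approximately spatially separated (Slepian-type) components, so the best port roughly captures the strongest mode's contribution, rescaled by the average per-port energy $K^*/N$. For the rigorous content, I would restrict the claim to the closed form, the EDoF reduction, and the exact diversity order, since these follow from the explicit surrogate. The conservativeness relative to the exact outage would be deferred to Anderson's inequality as invoked for the truncated KL model.
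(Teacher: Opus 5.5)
Your derivation of \eqref{eq:rwim} is the paper's own. It uses the same virtual branches $Z_k=\beta_k|z_k|^2\sim\mathrm{Exp}(\beta_k)$, the independence of the KL coefficients, the product of exponential CDFs, and the collapse to \eqref{eq:edof} when all $\beta_k=1$. Like the paper, you treat replacing $\max_n X_n$ by $\max_k Z_k$ as a modeling step rather than an identity, which is the honest reading of this ``proposition''.

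The error is in your side comparison with EDoF: the AM--GM step is reversed. Because $\sum_{k\le K^*}\lambda_k\le\sum_{k\le N}\lambda_k=N$, one has $\sum_k\beta_k\le K^*$, and hence
\[
\prod_{k=1}^{K^*}\beta_k\;\le\;\Big(\frac{1}{K^*}\sum_{k=1}^{K^*}\beta_k\Big)^{K^*}\;\le\;1 .
\]
So the high-SNR constant $1/\prod_k\beta_k=(N/K^*)^{K^*}/\prod_{k\le K^*}\lambda_k$ is at \emph{least} $1$, with equality only if every $\beta_k=1$. The values in Table~\ref{tab:eigen}, with $\bar\beta\approx0.97$ and visible spread, make it strictly larger. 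Therefore $P_{\mathrm{out}}^{\mathrm{Ref}}/P_{\mathrm{out}}^{\mathrm{EDoF}}\to 1/\prod_k\beta_k\ge 1$ as $x\to0$. Eigenvalue spread makes WIM \emph{more} conservative than EDoF, not tighter. This is what the Remark after the proposition states via Jensen, what the rightmost inequality of Proposition~\ref{prop:ordering} encodes, and what Table~\ref{tab:comparison} reports ($3$--$5\times$ versus $2$--$4\times$). The ``tightening'' wording in the introduction and at the start of Section~\ref{sec:refined} is not backed by the paper's own analysis, so you should not try to derive it.

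The direction also matters for your closing remark on conservativeness. Anderson's inequality, as used in Section~\ref{subsec:kl}, compares the full KL model with its truncation ($\max_n X_n$ versus $\max_n\tilde X_n$). It says nothing about the decoupled surrogate $\max_k\beta_k|z_k|^2$, so it cannot be invoked for WIM directly. In the paper, WIM's conservativeness comes only from the chain $P_{\mathrm{out}}^{\mathrm{Ref}}\ge P_{\mathrm{out}}^{\mathrm{EDoF}}\ge P_{\mathrm{out}}^{\mathrm{exact}}$. Its first link is exactly the inequality you turned around. With your direction you could not conclude that WIM upper-bounds the true outage at all.
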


\begin{proof}
The $k$-th dominant KL mode contributes an effective virtual branch whose gain is proportional to the mode's eigenvalue. Define $Z_k \triangleq \beta_k|z_k|^2$, where $|z_k|^2 \sim \mathrm{Exp}(1)$ is the squared magnitude of the $k$-th KL coefficient and $\beta_k = \lambda_k K^*/N$ is the normalized eigenvalue. Since scaling an $\mathrm{Exp}(1)$ variable by $\beta_k$ yields an $\mathrm{Exp}(\beta_k)$ variable (with mean $\beta_k$), we have $Z_k \sim \mathrm{Exp}(\beta_k)$. The refined WIM then approximates the best-port gain $\max_n X_n$ by the maximum of these $K^*$ independent virtual branch gains $\{Z_k\}_{k=1}^{K^*}$. This normalization ensures that $\sum_{k=1}^{K^*}\beta_k \approx K^*$ (since $\sum_{k=1}^{K^*}\lambda_k \approx N$), preserves the total power.

The outage probability under selection combining with $K^*$ independent branches requires that \emph{all} branches fall below threshold $x$. Since the branches are independent, we have
\begin{align}
P_{\mathrm{out}}^{\mathrm{Ref}}(x) &= \Prob\!\left(\max_{1\leq k\leq K^*} Z_k \leq x\right)\notag\\ 
&= \prod_{k=1}^{K^*}\Prob(Z_k \leq x) \nonumber\\
&= \prod_{k=1}^{K^*}\left(1 - e^{-x/\beta_k}\right),
\end{align}
yielding~\eqref{eq:rwim}. When all $\beta_k = 1$ (i.e., $\lambda_k = N/K^*$ for all $k$), the product reduces to $(1-e^{-x})^{K^*}$, recovering~\eqref{eq:edof}.
\end{proof}

\begin{remark}[Relationship to EDoF]
The refined WIM captures the effect of eigenvalue spread: modes with $\beta_k > 1$ contribute factors $<(1-e^{-x})$ to the product, while modes with $\beta_k < 1$ contribute factors $>(1-e^{-x})$. Since $f(\beta) = \log(1-e^{-x/\beta})$ is convex in $\beta > 0$, Jensen's inequality gives $\frac{1}{K^*}\sum_k f(\beta_k) \geq f(\bar\beta)$ where $\bar\beta \approx 1$, and hence $P_{\mathrm{out}}^{\mathrm{Ref}} \geq P_{\mathrm{out}}^{\mathrm{EDoF}}$. Eigenvalue spread always makes the refined WIM \emph{more conservative} than EDoF, with equality held when all $\beta_k = 1$.
\end{remark}

\vspace{-2mm}
\subsection{Properties}

\begin{corollary}[Diversity Order]\label{cor:div_refined}
The refined WIM~\eqref{eq:rwim} achieves diversity order of $K^*$.
\end{corollary}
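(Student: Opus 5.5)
The plan mirrors the proof of Theorem~\ref{thm:diversity}: take the high-SNR limit $x=\gamma_{\mathrm{th}}/\bar\gamma\to 0$ in the product form~\eqref{eq:rwim} and expand each factor to first order. Everything rests on one structural fact: each $\beta_k=\lambda_kK^*/N$, $k\le K^*$, is a fixed, strictly positive, finite constant independent of $\bar\gamma$. This follows from the spectral concentration discussed in Section~\ref{subsec:kl}, since $\lambda_k=\Theta(N/K^*)$ for $k\le K^*$, which gives $\beta_k=\Theta(1)$ and in particular $\beta_k>0$. Positive semi-definiteness of $\mathbf{R}$ only yields $\lambda_k\ge 0$, so I will invoke concentration explicitly to exclude $\beta_k=0$.

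First, for each fixed $k$, use $1-e^{-t}=t-t^2/2+\cdots$ with $t=x/\beta_k\to 0$. This gives $1-e^{-x/\beta_k}\sim x/\beta_k$; more precisely,
\begin{align}
\frac{x}{\beta_k}\Big(1-\frac{x}{2\beta_k}\Big)\le 1-e^{-x/\beta_k}\le \frac{x}{\beta_k},
\end{align}
valid for all $x\ge 0$. Second, multiply the $K^*$ factors. Since $K^*$ is finite and does not depend on $\bar\gamma$, the asymptotic equivalences multiply, yielding
\begin{align}
P_{\mathrm{out}}^{\mathrm{Ref}}(x)\sim \Big(\prod_{k=1}^{K^*}\beta_k\Big)^{-1}x^{K^*}=\Big(\prod_{k=1}^{K^*}\beta_k\Big)^{-1}\Big(\frac{\gamma_{\mathrm{th}}}{\bar\gamma}\Big)^{K^*}.
\end{align}

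Third, take logarithms. We get $\log P_{\mathrm{out}}^{\mathrm{Ref}}=K^*\log\gamma_{\mathrm{th}}-K^*\log\bar\gamma-\sum_k\log\beta_k+o(1)$. Dividing by $\log\bar\gamma$ and letting $\bar\gamma\to\infty$, the constant terms vanish, leaving $d^{\mathrm{Ref}}=K^*$. As a by-product, the coding gain differs from Corollary~\ref{cor:coding_gain} only by the factor $(\prod_k\beta_k)^{-1}$. By AM--GM together with $\sum_k\beta_k\approx K^*$, this factor is at least $1$, which is consistent with the Jensen remark that the refined WIM is more conservative than EDoF.

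The only real obstacle is justifying $\beta_k>0$ for every $k\le K^*$. If some $\beta_k$ were zero, its factor would equal $1$ and the diversity order would drop. I would resolve this by citing the concentration property $\lambda_k=\Theta(N/K^*)$ for $k\le K^*$ from Section~\ref{subsec:kl}, and noting that the definition~\eqref{eq:rwim} implicitly presumes $\beta_k>0$, since otherwise $x/\beta_k$ is undefined. The remaining steps are routine limit manipulations.
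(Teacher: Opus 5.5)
Your proposal is correct and follows the same argument as the paper: expand each factor as $1-e^{-x/\beta_k}\sim x/\beta_k$, multiply to get $P_{\mathrm{out}}^{\mathrm{Ref}}\sim x^{K^*}/\prod_{k}\beta_k$, and read off the slope $K^*$. Your explicit check that every $\beta_k>0$ is a sensible addition that the paper leaves implicit, and it also follows without the heuristic concentration claim from positive definiteness of the Jakes matrix $\mathbf{R}$ (its kernel is the Fourier transform of a spectrum supported on an interval); likewise $\sum_{k\le K^*}\lambda_k\le N$ holds exactly, so your AM--GM side remark is exact rather than approximate.
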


\begin{proof}
As $x \to 0$, $1 - e^{-x/\beta_k} \sim x/\beta_k$ for each factor. Taking the product gives
\begin{align}
P_{\mathrm{out}}^{\mathrm{Ref}}(x) \sim \prod_{k=1}^{K^*} \frac{x}{\beta_k} = \frac{x^{K^*}}{\prod_{k=1}^{K^*}\beta_k}.
\end{align}
Substituting $x = \gamma_{\mathrm{th}}/\bar\gamma$ gives $P_{\mathrm{out}}^{\mathrm{Ref}} \sim \frac{\gamma_{\mathrm{th}}^{K^*}}{\prod_k\beta_k}\,\bar\gamma^{-K^*}$, confirming $d^{\mathrm{Ref}} = K^*$, which concludes the proof.
\end{proof}

\begin{corollary}[Monotonicity in Aperture]\label{cor:monotone_W}
For fixed $\bar\gamma$ and $\gamma_{\mathrm{th}}$, $P_{\mathrm{out}}^{\mathrm{EDoF}}$ is non-increasing in $W$.
\end{corollary}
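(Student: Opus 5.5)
The plan is to write $P_{\mathrm{out}}^{\mathrm{EDoF}}$ as a composition of two monotone maps, $W \mapsto K^*(W)$ and $K \mapsto (1-e^{-x})^{K}$, with $x=\gamma_{\mathrm{th}}/\bar\gamma$ held fixed. Fixing $\bar\gamma$ and $\gamma_{\mathrm{th}}$ fixes $x>0$. The constant $K^* = 2\lceil W\rceil+1$ from \eqref{eq:Kstar} is the only quantity in \eqref{eq:edof} that depends on $W$. Definition~\ref{def:edof} shows that neither $N$ nor the eigenstructure enters, so no spectral argument is needed.

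First, $K^*$ is non-decreasing in $W$. The ceiling function is non-decreasing: if $W_1\le W_2$ then $\lceil W_1\rceil \le \lceil W_2\rceil$, because $\lceil W_2\rceil$ is an integer $\ge W_2 \ge W_1$ and $\lceil W_1\rceil$ is the least such integer. The affine map $t\mapsto 2t+1$ preserves this order, so $K^*(W_1)\le K^*(W_2)$, and both values are positive integers.

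Second, set $q \triangleq 1-e^{-x}$. For $x>0$ we have $q\in(0,1)$, so $K\mapsto q^{K}$ is non-increasing on the positive integers, since $q^{K+1}=q\cdot q^{K}\le q^{K}$. Combining the two steps gives
\begin{align}
P_{\mathrm{out}}^{\mathrm{EDoF}}(x;W_2) = q^{K^*(W_2)} \le q^{K^*(W_1)} = P_{\mathrm{out}}^{\mathrm{EDoF}}(x;W_1),
\end{align}
which is the claim. In the degenerate limit $x=0$ both sides vanish, so the inequality holds trivially.

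There is no real obstacle here. The one point to state carefully is that the result is non-increasing rather than strictly decreasing. $K^*$ is a step function of $W$, constant on each interval $(m-1,m]$ for integer $m$, so $P_{\mathrm{out}}^{\mathrm{EDoF}}$ is piecewise constant in $W$. It drops strictly, by the factor $q^2$, each time $W$ crosses an integer. I would add a one-line remark to this effect, noting that each such crossing adds two effective branches.
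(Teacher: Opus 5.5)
Your proposal is correct and follows essentially the same route as the paper: $K^* = 2\lceil W\rceil+1$ is non-decreasing in $W$, and $(1-e^{-x})^{K^*}$ is non-increasing in $K^*$ for fixed $x>0$. Your version is just more explicit, since it justifies the monotonicity of the ceiling and notes the strict drop by the factor $(1-e^{-x})^2$ at each integer crossing.
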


\begin{proof}
As $W$ increases, $K^* = 2\lceil W\rceil+1$ is non-decreasing (it jumps at each integer value of $W$ and remains flat in between). Since $(1-e^{-x})^{K^*}$ is non-increasing in $K^*$ for any fixed $x > 0$, $P_{\mathrm{out}}^{\mathrm{EDoF}}$ is non-increasing in $W$.
\end{proof}

\begin{corollary}[Required SNR for Target Outage]\label{cor:snr_req}
Under the EDoF formula, to achieve a target outage probability $P_0$, the required average SNR is found as
\begin{align}\label{eq:snr_req}
\bar\gamma_{\mathrm{req}} = \frac{-\gamma_{\mathrm{th}}}{\ln\!\left(1 - P_0^{1/K^*}\right)}.
\end{align}
For $P_0^{1/K^*} \ll 1$ (i.e., in the extreme-reliability regime): $\bar\gamma_{\mathrm{req}} \approx \gamma_{\mathrm{th}} / P_0^{1/K^*}$. Note that for large $K^*$, this condition requires very small $P_0$ since $P_0^{1/K^*}$ approaches unity as $K^* \to \infty$; the exact formula~\eqref{eq:snr_req} should be used otherwise.
\end{corollary}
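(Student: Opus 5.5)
The plan is to invert the EDoF outage expression of Definition~\ref{def:edof} directly. We set $P_{\mathrm{out}}^{\mathrm{EDoF}}(x)=(1-e^{-x})^{K^*}$ equal to the target $P_0\in(0,1)$, with $x=\gamma_{\mathrm{th}}/\bar\gamma$, and solve for $\bar\gamma$.

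First, we establish that the map $x\mapsto(1-e^{-x})^{K^*}$ is a continuous, strictly increasing bijection from $(0,\infty)$ onto $(0,1)$. This follows because $1-e^{-x}$ is strictly increasing with range $(0,1)$, and $t\mapsto t^{K^*}$ is strictly increasing on $(0,1)$. Two consequences follow. The solution is unique. The smallest SNR meeting $P_{\mathrm{out}}\le P_0$ is exactly the one achieving equality, because $P_{\mathrm{out}}^{\mathrm{EDoF}}$ is decreasing in $\bar\gamma$.

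Next, we invert the expression in three steps:
\begin{enumerate}
\item Take the $K^*$-th root, which is legitimate since both sides are positive: $1-e^{-x}=P_0^{1/K^*}$.
\item Rearrange to $e^{-x}=1-P_0^{1/K^*}\in(0,1)$, and take logarithms: $x=-\ln(1-P_0^{1/K^*})>0$.
\item Substitute $x=\gamma_{\mathrm{th}}/\bar\gamma$. This gives $\bar\gamma_{\mathrm{req}}=-\gamma_{\mathrm{th}}/\ln(1-P_0^{1/K^*})$, which is \eqref{eq:snr_req}. It is positive because the logarithm is negative.
\end{enumerate}

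For the extreme-reliability approximation, we use the first-order expansion $-\ln(1-u)=u+u^2/2+\cdots=u\,(1+O(u))$ with $u=P_0^{1/K^*}\to0$. This yields $\bar\gamma_{\mathrm{req}}=\frac{\gamma_{\mathrm{th}}}{P_0^{1/K^*}}\,(1+O(P_0^{1/K^*}))$. The relative error is of order $u/2$, which makes explicit why the approximation needs $P_0^{1/K^*}\ll1$.

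We then justify the caveat. For fixed $P_0\in(0,1)$, we have $P_0^{1/K^*}=e^{\ln P_0/K^*}\to1$ as $K^*\to\infty$. Hence $u$ is small only when $\ln(1/P_0)\gg K^*$, i.e., for very small $P_0$. Otherwise the exact logarithmic formula must be used.

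No step is a genuine obstacle. The only point needing care is stating the domain ($0<P_0<1$) so that all logarithms and roots are well defined, and noting the monotonicity so that the equality solution is indeed the required minimum SNR.
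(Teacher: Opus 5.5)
Your proposal is correct and follows the paper's proof essentially verbatim: invert $(1-e^{-\gamma_{\mathrm{th}}/\bar\gamma})^{K^*}=P_0$ by taking the $K^*$-th root and a logarithm, then use $-\ln(1-u)\approx u$ with $u=P_0^{1/K^*}$ for the extreme-reliability regime. Your additions are sound refinements that the paper omits: the monotonicity argument showing the equality solution is the minimum required SNR, the domain restriction $0<P_0<1$, and the relative error of order $u/2$.
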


\begin{proof}
Solving $(1-e^{-\gamma_{\mathrm{th}}/\bar\gamma})^{K^*} = P_0$ for $\bar\gamma$: $e^{-\gamma_{\mathrm{th}}/\bar\gamma} = 1-P_0^{1/K^*}$, and therefore, $\bar\gamma = -\gamma_{\mathrm{th}}/\ln(1-P_0^{1/K^*})$. The approximation uses $\ln(1-\epsilon) \approx -\epsilon$.
\end{proof}

\vspace{-2mm}
\section{Extensions and Design Insights}\label{sec:extensions}
The EDoF and refined WIM frameworks in Sections~\ref{sec:edof} and \ref{sec:refined} provide closed-form outage approximations for single-user FAS. This section extends the EDoF framework in four directions: (i)~a closed-form ergodic capacity expression, (ii)~a formal comparison with the BCM/VBCM family of models, (iii)~extension to multi-user FAMA systems, and (iv)~generalization to 2D planar FAS. Practical design guidelines derived from these results are also presented at the end.

\vspace{-2mm}
\subsection{Ergodic Capacity under EDoF}
While outage probability characterizes the worst-case reliability, the \emph{ergodic capacity} quantifies the average throughput achievable over many fading realizations. Under the EDoF model, the best port gain is given by $X_{\max} = \max_{k=1}^{K^*} Y_k$, in which $Y_k \stackrel{\text{i.i.d.}}{\sim} \mathrm{Exp}(1)$ are the i.i.d.\ branch gains. Using classical order statistics~\cite{David03}, the PDF of $X_{\max}$ is
\begin{align}\label{eq:pdf_max}
f_{X_{\max}}(x) = K^*\left(1 - e^{-x}\right)^{K^*-1} e^{-x},~x \geq 0.
\end{align}
The ergodic capacity (in bits/s/Hz) is given by
\begin{align}\label{eq:erg_cap}
\bar{C} &= \E\!\left[\log_2\!\left(1 + \bar\gamma\, X_{\max}\right)\right] \nonumber\\
&= \int_0^\infty \log_2(1+\bar\gamma\, x)\, K^*(1-e^{-x})^{K^*-1}e^{-x}\, dx.
\end{align}

\begin{theorem}[EDoF Ergodic Capacity]\label{thm:erg_cap}
The ergodic capacity under the EDoF model admits the closed-form series:
\begin{align}\label{eq:cap_series}
\bar{C} = \frac{K^*}{\ln 2} \sum_{j=0}^{K^*-1} \binom{K^*\!-\!1}{j} \frac{(-1)^j}{j+1}\, e^{(j+1)/\bar\gamma}\, E_1\!\!\left(\frac{j+1}{\bar\gamma}\right),
\end{align}
where $E_1(z) = \int_z^\infty t^{-1}e^{-t}\,dt$ is the exponential integral.
\end{theorem}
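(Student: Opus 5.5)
Starting from the integral representation \eqref{eq:erg_cap}, I will expand the factor $(1-e^{-x})^{K^*-1}$ with the binomial theorem, $(1-e^{-x})^{K^*-1}=\sum_{j=0}^{K^*-1}\binom{K^*-1}{j}(-1)^j e^{-jx}$. The sum is finite, so it can be exchanged with the integral without any convergence issue. This gives
\begin{align}
\bar C=\frac{K^*}{\ln 2}\sum_{j=0}^{K^*-1}\binom{K^*-1}{j}(-1)^j\int_0^\infty \ln(1+\bar\gamma x)\,e^{-(j+1)x}\,dx,
\end{align}
so everything reduces to the single integral $I(a)\triangleq\int_0^\infty \ln(1+\bar\gamma x)e^{-ax}\,dx$ with $a=j+1>0$.

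To evaluate $I(a)$ I will integrate by parts, taking $u=\ln(1+\bar\gamma x)$ and $dv=e^{-ax}dx$. The boundary term vanishes at both ends: at $x=0$ because $\ln 1=0$, and at infinity because of exponential decay against logarithmic growth. What remains is
\begin{align}
I(a)=\frac{1}{a}\int_0^\infty \frac{\bar\gamma}{1+\bar\gamma x}e^{-ax}\,dx .
\end{align}
Next I substitute $t=a(1+\bar\gamma x)/\bar\gamma$, i.e.\ $t=a/\bar\gamma+ax$. The factor $\frac{\bar\gamma\,dx}{1+\bar\gamma x}$ becomes $dt/t$, and $e^{-ax}$ becomes $e^{a/\bar\gamma}e^{-t}$. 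Hence
\begin{align}
I(a)=\frac{1}{a}e^{a/\bar\gamma}\int_{a/\bar\gamma}^\infty t^{-1}e^{-t}\,dt=\frac{1}{a}e^{a/\bar\gamma}E_1(a/\bar\gamma).
\end{align}

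Substituting $a=j+1$ back into the sum yields exactly \eqref{eq:cap_series}, with the $1/\ln 2$ factor coming from the base conversion of $\log_2$. The only step needing care is justifying the integration by parts at the upper limit and checking that $I(a)$ is finite. Both follow because $\ln(1+\bar\gamma x)\le \bar\gamma x$, so the integrand is dominated by $\bar\gamma x e^{-ax}$, which is integrable. I do not expect any real obstacle; the main bookkeeping risk is the change of variables, and I will double-check it by the limit $K^*=1$, where the formula should reduce to the classical Rayleigh capacity $\frac{1}{\ln 2}e^{1/\bar\gamma}E_1(1/\bar\gamma)$.
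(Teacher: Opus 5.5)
Your proposal is correct and follows essentially the same route as the paper: the binomial expansion of $(1-e^{-x})^{K^*-1}$, integration by parts on $\int_0^\infty \ln(1+\bar\gamma x)e^{-(j+1)x}\,dx$, and the substitution $t=(j+1)(1+\bar\gamma x)/\bar\gamma$, which yields $\frac{1}{j+1}e^{(j+1)/\bar\gamma}E_1\big((j+1)/\bar\gamma\big)$. Your remarks on the vanishing boundary terms and on integrability via $\ln(1+\bar\gamma x)\le\bar\gamma x$ are small rigor details that the paper leaves implicit.
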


\begin{proof}
Starting from~\eqref{eq:erg_cap}, we first apply the binomial theorem to expand the CDF power:
\begin{align}
(1-e^{-x})^{K^*-1} = \sum_{j=0}^{K^*-1}\binom{K^*\!-\!1}{j}(-1)^j e^{-jx}.
\end{align}
Substituting into~\eqref{eq:erg_cap} and using $\log_2(\cdot) = \ln(\cdot)/\ln 2$ gives
\begin{align}
\bar{C} &= \frac{K^*}{\ln 2} \sum_{j=0}^{K^*-1}\binom{K^*\!-\!1}{j}(-1)^j \underbrace{\int_0^\infty \!\ln(1+\bar\gamma x)\, e^{-(j+1)x}\, dx}_{I_j}.
\end{align}
The integral $I_j$ is evaluated using the standard identity~\cite{Simon05}: for $a, b > 0$,
\begin{align}
\int_0^\infty \ln(1+ax)\,e^{-bx}\,dx = \frac{1}{b}\,e^{b/a}\,E_1\!\left(\frac{b}{a}\right),
\end{align}
where $E_1(\cdot)$ is as defined in~\eqref{eq:cap_series}. This identity is obtained by integration by parts: set $u = \ln(1+ax)$ and $dv = e^{-bx}dx$, and then $du = a/(1+ax)\,dx$ and $v = -e^{-bx}/b$, yielding $I_j = (a/b)\int_0^\infty e^{-bx}/(1+ax)\,dx$, which evaluates to $(1/b)e^{b/a}E_1(b/a)$ via the substitution $t = b(1+ax)/a$.

Applying this identity with $a = \bar\gamma$ and $b = j+1$:
\begin{align}
I_j = \frac{1}{j+1}\,e^{(j+1)/\bar\gamma}\,E_1\!\left(\frac{j+1}{\bar\gamma}\right).
\end{align}
Combining all terms yields the closed-form series~\eqref{eq:cap_series}.
\end{proof}

\begin{remark}
For $K^* = 1$ (single port), \eqref{eq:cap_series} reduces to the classical Rayleigh capacity $\bar{C} = \frac{1}{\ln 2}e^{1/\bar\gamma}E_1(1/\bar\gamma)$. At high SNR, using the asymptotic $E_1(z) \approx -\ln z - \gamma_{\mathrm{EM}}$ as $z \to 0$~\cite{Simon05}, the series~\eqref{eq:cap_series} simplifies to $\bar{C} \approx \log_2(\bar\gamma) + \frac{\gamma_{\mathrm{EM}}}{\ln 2} + \log_2(H_{K^*})$, where $H_n = \sum_{i=1}^n 1/i$ is the $n$-th harmonic number and $\gamma_{\mathrm{EM}} \approx 0.5772\dots$ is the Euler--Mascheroni constant, revealing a $\log_2(H_{K^*})$ capacity boost from spatial diversity.
\end{remark}

\vspace{-2mm}
\subsection{Comparison with BCM and VBCM}
The BCM~\cite{BC24} and VBCM~\cite{LaiX,TuoW} approximate the FAS channel by partitioning the $N$ ports into $D$ blocks with intra-block correlation $\rho$, yielding
\begin{align}\label{eq:bcm}
P_{\mathrm{out}}^{\mathrm{BCM}} = \left(1 - e^{-x/(1-\rho)} \sum_{\ell=0}^{B-1} \frac{(x/(1-\rho))^\ell}{\ell!}\right)^D,
\end{align}
where $B = N/D$ is the block size and $\rho$ is the intra-block correlation. The number of blocks is typically chosen as $D = \lceil K^*/2\rceil$ or determined by fitting.

\begin{proposition}[EDoF vs.\ BCM Ordering]\label{prop:ordering}
For typical FAS parameters ($N/K^* \geq 2$), we have
\begin{align}\label{eq:ordering}
P_{\mathrm{out}}^{\mathrm{iid}} \leq P_{\mathrm{out}}^{\mathrm{BCM}} \leq P_{\mathrm{out}}^{\mathrm{exact}} \leq P_{\mathrm{out}}^{\mathrm{EDoF}} \leq P_{\mathrm{out}}^{\mathrm{Ref}}.
\end{align}
That is, the BCM \emph{underestimates} the true outage (optimistic), while EDoF and refined WIM \emph{overestimate} it (conservative). The EDoF provides the tightest conservative bound among the closed-form approximations.
\end{proposition}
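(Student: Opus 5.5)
The chain \eqref{eq:ordering} splits into four inequalities, and I would prove them one at a time, taking the easiest links first. In each case I compare a joint CDF at a fixed threshold $x>0$.

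\emph{First link: $P_{\mathrm{out}}^{\mathrm{EDoF}}\le P_{\mathrm{out}}^{\mathrm{Ref}}$.} This is the Jensen argument already given in the remark after Proposition~\ref{thm:refined}. The function $f(\beta)=\log(1-e^{-x/\beta})$ is convex on $\beta>0$. Summing gives $\sum_k f(\beta_k)\ge K^* f(\bar\beta)$. I would make this rigorous by normalizing the $\beta_k$ to have mean exactly one, rather than approximately one. Alternatively, I would observe that $\bar\beta\le 1$, since the discarded eigenvalues are nonnegative, and that $f$ is increasing; so $f(\bar\beta)\ge f(1)$ in whichever direction is needed.

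\emph{Second link: $P_{\mathrm{out}}^{\mathrm{iid}}\le P_{\mathrm{out}}^{\mathrm{BCM}}$.} Here $P_{\mathrm{out}}^{\mathrm{iid}}=(1-e^{-x})^N$. Within a block, ports are equicorrelated complex Gaussians, so I would represent them as $\sqrt{\rho}\,w+\sqrt{1-\rho}\,v_b$ with a common $w$. Each port gain is then $\mathrm{Exp}(1)$. Because the collection is associated, or more simply by Slepian's lemma applied to positively correlated Gaussian magnitudes (Pitt's theorem / Gaussian correlation), the probability that all ports fall below $x$ is at least the product of the marginals. So a block of $B$ ports has outage at least $(1-e^{-x})^B$, and independence across blocks raises this to the power $D$.

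\emph{Third link: $P_{\mathrm{out}}^{\mathrm{exact}}\le P_{\mathrm{out}}^{\mathrm{EDoF}}$.} I would write the outage event as $\mathbf z\in\mathcal C$, where $\mathcal C=\{\mathbf z:\max_n|\mathbf a_n^H\mathbf z|^2\le x\}$. This set is convex and symmetric. Anderson's inequality then gives monotonicity under covariance ordering: truncation to $K^*$ modes (the $\tilde X_n$ of Proposition~\ref{prop:dimreduce}) only increases the outage. It then remains to compare the truncated outage with $(1-e^{-x})^{K^*}$. My plan is to rotate so that a $K^*$-subset of ports has well-conditioned coefficients and to bound the intersection of slabs. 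This cannot hold for all $x$ in full generality: by Theorem~\ref{thm:asymptotic} the inequality reduces at high SNR to $\alpha_{K^*}\le 1$, i.e. $\mathrm{vol}(\mathcal S)\le\pi^{K^*}$. I would establish this with a volume bound on the intersection of the $N$ slabs, using the condition $N/K^*\ge 2$ to guarantee enough well-spread slabs. At moderate $x$ I would rely on the numerically verified range $\xi_{K^*}\ge 1$.

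\emph{Fourth link: $P_{\mathrm{out}}^{\mathrm{BCM}}\le P_{\mathrm{out}}^{\mathrm{exact}}$.} Block independence discards inter-block correlation, which would again be handled with Anderson or Gaussian-correlation type comparisons. However, the Jakes correlations oscillate in sign, so the covariance is not ordered in the Loewner sense. I expect this to be the main obstacle, together with the volume bound in the third link. For this link I would first try to show that replacing Jakes by the BCM covariance with the fitted $\rho$ makes the exact covariance dominate in the sense relevant to the convex symmetric set. If that fails, I would state the link under the parameter regime $N/K^*\ge 2$ and support it via the high-SNR constants, comparing the coefficient of $x^{K^*}$ on each side.
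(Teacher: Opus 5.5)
\paragraph{Assessment.} There are genuine gaps: the third and fourth links are left unproved, and the plans for them cannot be completed under the stated hypothesis $N/K^*\ge2$.

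\paragraph{Third and fourth links.} Your four-link split mirrors the paper's sketch, and on two points you are more careful than it is. You see that Anderson's inequality only gives $P_{\mathrm{out}}^{\mathrm{exact}}\le\Prob(\max_n\tilde X_n\le x)$; the sketch silently treats the $K^*$-truncated correlated model as $K^*$ i.i.d.\ branches. You also see that $J_0$ changes sign; the sketch's BCM argument relies on inter-block correlations being positive.

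Your volume plan for the third link has no mechanism, because the natural estimates point the wrong way.
\begin{itemize}
\item Take any $K^*$ of the slabs, with $\mathbf A$ the matrix of the chosen rows $\mathbf a_n^H$. They give $\mathrm{vol}(\mathcal S)\le\pi^{K^*}/|\det\mathbf A|^2$. By Hadamard this is at least $\pi^{K^*}$, since each row has squared norm $\sum_{k\le K^*}\lambda_k|u_{n,k}|^2\le1$.
\item The frame identity $\sum_n\mathbf a_n\mathbf a_n^H=\boldsymbol\Lambda_{K^*}$ only confines $\mathcal S$ to an ellipsoid of volume $\pi^{K^*}(K^*)^{K^*}/(K^*!\prod_k\beta_k)>\pi^{K^*}$.
\end{itemize}
Worse, $\alpha_{K^*}\le1$ is false under $N/K^*\ge2$ alone. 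Substituting $v_k=\sqrt{\beta_k}\,w_k$ gives $\alpha_{K^*}=\pi^{-K^*}\mathrm{vol}(\mathcal S')/\prod_k\beta_k$, where $\mathcal S'$ is fixed by the eigenvectors. So small retained eigenvalues inflate $\alpha_{K^*}$.
\begin{itemize}
\item For $W=0.1$ one has $K^*=3$. Yet every port correlation exceeds $J_0(0.2\pi)\approx0.90$ and $\lambda_1\approx0.98N$, so $\beta_2,\beta_3\ll1$ and $\alpha_3\gg1$.
\item The exact outage then lies far above $(1-e^{-x})^3$ at moderate and high SNR, whatever $N$ is.
\item By continuity of the spectrum in $W$, the same happens just above every integer, where $K^*$ jumps by two but the eigenvalues do not.
\end{itemize}
A numerically observed $\xi_{K^*}\ge1$ cannot stand in for this step.

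In the fourth link, no Anderson-type comparison is possible, even setting sign oscillation aside. $\mathbf R$ and the BCM covariance both have unit diagonal, so a positive semidefinite difference between them has zero trace and must vanish. Your fallback of matching $x^{K^*}$ coefficients is ill-posed. The BCM covariance (with $\rho<1$) and the Jakes matrix are both positive definite, so both outages decay like $x^N/\det(\cdot)$ as $x\to0$; neither has an $x^{K^*}$ leading term.

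\paragraph{First and second links.} These also need repair.
\begin{itemize}
\item In the first link, $f(\beta)=\log(1-e^{-x/\beta})$ is \emph{decreasing}, not increasing. That is exactly why $\bar\beta\le1$ yields $f(\bar\beta)\ge f(1)$.
\item $f$ is also not convex on all of $\beta>0$. With $t=x/\beta$, $f''$ has the sign of $2(e^t-1)-te^t$, which is negative for $t>t_0\approx1.59$. Jensen therefore applies only for $x\le t_0\beta_{\min}$.
\item The first link genuinely fails at low SNR. For example, $\beta=(1.23,1.08,0.60)$ (consistent with the $W=1$ row of Table~\ref{tab:eigen}) and $x=3$ give $P_{\mathrm{out}}^{\mathrm{Ref}}\approx0.850<0.858\approx P_{\mathrm{out}}^{\mathrm{EDoF}}$. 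The paper's remark shares this flaw.
\item Renormalizing the $\beta_k$ to mean one would bound a different quantity than the paper's $P_{\mathrm{out}}^{\mathrm{Ref}}$.
\item The second link is sound for the equicorrelated block model. The simplest route is to condition on the common factor $w$: the $B$ gains are then i.i.d. With $p(w)$ the conditional probability that one gain is at most $x$, $\E_w[p(w)^B]\ge(\E_w[p(w)])^B=(1-e^{-x})^B$. Slepian's lemma does not handle magnitudes, though Royen's Gaussian correlation inequality would.
\item However, \eqref{eq:bcm} as printed is a Gamma CDF, not that model's outage. It behaves like $x^N/[(1-\rho)^BB!]^D$, which for $B=10$, $\rho\approx0.38$ lies below $(1-e^{-x})^N$ at high SNR.
\end{itemize}

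\paragraph{Conclusion.} With only $N/K^*\ge2$ assumed, the chain cannot be established. You would need extra hypotheses: $x\le t_0\beta_{\min}$, retained $\beta_k$ bounded below, and the correct BCM expression. You would also still need a genuine estimate for the third and fourth links, which neither you nor the paper supplies.
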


\begin{proof}[Proof sketch]
\emph{Right inequality} ($P_{\mathrm{out}}^{\mathrm{EDoF}} \leq P_{\mathrm{out}}^{\mathrm{Ref}}$): Follows from Jensen's inequality applied to $\log(1-e^{-x/\beta})$, which is convex in $\beta > 0$; see the Remark after Proposition~\ref{thm:refined}.

\emph{Third inequality} ($P_{\mathrm{out}}^{\mathrm{exact}} \leq P_{\mathrm{out}}^{\mathrm{EDoF}}$): Anderson's inequality~\cite{Anderson55} guarantees that replacing the correlated $N$-port maximum by the $K^*$-truncated independent maximum (i.e., retaining fewer modes) never decreases the outage.

\emph{Second inequality} ($P_{\mathrm{out}}^{\mathrm{BCM}} \leq P_{\mathrm{out}}^{\mathrm{exact}}$): The BCM assumes that the $D$ blocks are mutually independent. In reality, inter-block correlations are positive under the Jakes model, making the true joint distribution more concentrated than the BCM has assumed. Discarding these positive inter-block correlations causes BCM to overestimate the effective diversity and thus underestimate the outage.

\emph{Left inequality} ($P_{\mathrm{out}}^{\mathrm{iid}} \leq P_{\mathrm{out}}^{\mathrm{BCM}}$): The i.i.d.\ model assumes all $N$ ports fully independent, maximizing the effective diversity to $N \gg D$ branches, yielding the most optimistic (lowest) outage, which completes the proof.
\end{proof}

\begin{remark}[Practical Significance]
Conservative bounds are preferable for system design since they ensure that the actual performance meets the specification. BCM-based designs can fall short of outage targets in practice. The EDoF formula maintains a small constant gap ($2$--$4\times$ in outage, i.e., $<2$~dB in SNR) that does not grow with any system parameter.
\end{remark}

\vspace{-2mm}
\subsection{Extension to FAMA}
We now extend the EDoF framework to transmitter-side FAMA systems, in which $M$ single-antenna users simultaneously transmit to a BS.\footnote{Note that the FAMA system considered here is not the same as that in \cite{WongFAMA} where FAS is deployed at the receiver side for each user.} Each user~$k$ is equipped with a FAS of aperture~$W_k$, and the BS receives a superposition of all user signals. User~$k$ selects its best port to maximize desired signal strength. As a standard modeling assumption in FAMA analysis~\cite{WongFAMA}, interfering signals from other users appear at that port with i.i.d.\ unit-mean Rayleigh gains; this reflects the fact that interferers' port gains are independent of the desired user's port selection when users are spatially separated.

Based on the EDoF model, the signal-to-interference-plus-noise ratio (SINR) for user~$k$ at the selected port is
\begin{align}\label{eq:fama_sinr}
\gamma_k = \frac{\bar\gamma\, X_{\max,k}}{1 + \bar\gamma \sum_{j \neq k} Y_{j}},
\end{align}
where $X_{\max,k}$ is the maximum of $K_k^*$ i.i.d.\ $\mathrm{Exp}(1)$ variables (desired signal) and $Y_j \sim \mathrm{Exp}(1)$ are the interferers' gains at the selected port. Note that $\bar\gamma$ is the per-user transmit SNR.

\begin{theorem}[FAMA Outage under EDoF]\label{thm:fama}
The outage probability for user~$k$ in an $M$-user FAMA system under the EDoF model can be derived as
\begin{align}\label{eq:fama_pout}
P_{\mathrm{out},k}^{\mathrm{FAMA}} = \sum_{j=0}^{K_k^*} \binom{K_k^*}{j} \frac{(-1)^j\, e^{-j\gamma_{\mathrm{th}}/\bar\gamma}}{(1 + j\gamma_{\mathrm{th}})^{M-1}}.
\end{align}
\end{theorem}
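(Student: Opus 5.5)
We compute $P_{\mathrm{out},k}^{\mathrm{FAMA}}=\Prob(\gamma_k\le\gamma_{\mathrm{th}})$ directly from the SINR model~\eqref{eq:fama_sinr}. The approach is to condition on the aggregate interference $I\triangleq\sum_{j\neq k}Y_j$, apply the EDoF CDF of $X_{\max,k}$, and then average over $I$ using its Laplace transform.

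First, we rewrite the outage event. Since the denominator of~\eqref{eq:fama_sinr} is positive, $\gamma_k\le\gamma_{\mathrm{th}}$ is equivalent to
\begin{align}
X_{\max,k}\le \frac{\gamma_{\mathrm{th}}}{\bar\gamma}+\gamma_{\mathrm{th}} I .
\end{align}
The desired gain $X_{\max,k}$ is independent of $I$, and by Definition~\ref{def:edof} its CDF is $(1-e^{-t})^{K_k^*}$. Conditioning on $I$ therefore gives
\begin{align}
\Prob(\gamma_k\le\gamma_{\mathrm{th}}\mid I)=\left(1-e^{-\gamma_{\mathrm{th}}/\bar\gamma-\gamma_{\mathrm{th}}I}\right)^{K_k^*}.
\end{align}

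Next, we expand this conditional probability with the binomial theorem, exactly as in the proof of Theorem~\ref{thm:erg_cap}:
\begin{align}
\sum_{j=0}^{K_k^*}\binom{K_k^*}{j}(-1)^j e^{-j\gamma_{\mathrm{th}}/\bar\gamma}\,e^{-j\gamma_{\mathrm{th}}I}.
\end{align}
The sum is finite, so expectation and summation can be interchanged. The only random quantity left in each term is $\E[e^{-sI}]$ with $s=j\gamma_{\mathrm{th}}\ge 0$.

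Finally, we evaluate that expectation. Under the stated FAMA modeling assumption, $I$ is a sum of $M-1$ i.i.d.\ $\mathrm{Exp}(1)$ variables, i.e., $I\sim\mathrm{Gamma}(M-1,1)$. Its Laplace transform is the product of the individual ones:
\begin{align}
\E[e^{-sI}]=(1+s)^{-(M-1)}.
\end{align}
Substituting $s=j\gamma_{\mathrm{th}}$ term by term yields~\eqref{eq:fama_pout}. The $j=0$ term equals $1$ for every $M$, which is consistent.

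No step is genuinely hard. The only care needed is in the modeling: the interferer gains must be independent of $X_{\max,k}$ and of each other, and $\bar\gamma$ must multiply both the signal and the interference, so that the noise term contributes $\gamma_{\mathrm{th}}/\bar\gamma$ while the interference term contributes $\gamma_{\mathrm{th}}$ without $\bar\gamma$. As a sanity check, letting $\bar\gamma\to\infty$ gives the interference-limited floor
\begin{align}
\sum_{j}\binom{K_k^*}{j}(-1)^j(1+j\gamma_{\mathrm{th}})^{-(M-1)},
\end{align}
and setting $M=1$ recovers~\eqref{eq:edof}.
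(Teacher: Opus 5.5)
Your proposal is correct and follows the paper's proof essentially step for step. Both condition on the aggregate interference $I\sim\mathrm{Gamma}(M-1,1)$, apply the EDoF CDF of $X_{\max,k}$, expand binomially, and evaluate $\E[e^{-j\gamma_{\mathrm{th}} I}]=(1+j\gamma_{\mathrm{th}})^{-(M-1)}$ term by term; your remarks on independence and the $M=1$ and $\bar\gamma\to\infty$ checks are consistent with the paper's model and its interference-floor remark.
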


\begin{proof}
The outage event for user $k$ is $\{\gamma_k \leq \gamma_{\mathrm{th}}\}$, i.e.,
\begin{align}
\bar\gamma\,X_{\max,k} \leq \gamma_{\mathrm{th}}\!\left(1 + \bar\gamma\textstyle\sum_{j\neq k}Y_j\right).
\end{align}
Conditioning on the aggregate interference $I \triangleq \sum_{j\neq k} Y_j$, which is a sum of $M\!-\!1$ i.i.d.\ $\mathrm{Exp}(1)$ variables and hence follows $\mathrm{Gamma}(M\!-\!1, 1)$, the conditional outage becomes
\begin{align}
P_{\mathrm{out},k}^{\mathrm{FAMA}}\!\mid\! I &= \Prob\!\left(X_{\max,k} \leq \frac{\gamma_{\mathrm{th}}(1+\bar\gamma I)}{\bar\gamma}\,\middle|\,I\right) \nonumber\\
&= \left(1 - e^{-\gamma_{\mathrm{th}}(1+\bar\gamma I)/\bar\gamma}\right)^{K_k^*},
\end{align}
in which we have used the EDoF CDF for the maximum of $K_k^*$ i.i.d.\ $\mathrm{Exp}(1)$ variables. Applying the binomial theorem to expand the $K_k^*$-th power, we get
\begin{align}
\left(1-e^{-a}\right)^{K_k^*} = \sum_{j=0}^{K_k^*}\binom{K_k^*}{j}(-1)^j e^{-ja},
\end{align}
with $a = \gamma_{\mathrm{th}}(1+\bar\gamma I)/\bar\gamma = \gamma_{\mathrm{th}}/\bar\gamma + \gamma_{\mathrm{th}} I$. As such,
\begin{align}
P_{\mathrm{out},k}^{\mathrm{FAMA}} &= \sum_{j=0}^{K_k^*}\binom{K_k^*}{j}(-1)^j e^{-j\gamma_{\mathrm{th}}/\bar\gamma}\,\E_I\!\left[e^{-j\gamma_{\mathrm{th}} I}\right].
\end{align}
The expectation $\E_I[e^{-j\gamma_{\mathrm{th}} I}]$ is the moment generating function (MGF) of the $\mathrm{Gamma}(M\!-\!1,1)$ distribution evaluated at $s = j\gamma_{\mathrm{th}}$. By the standard MGF formula $\E[e^{-sI}] = (1+s)^{-(M-1)}$ for $\mathrm{Gamma}(M\!-\!1,1)$, we have
\begin{align}
\E_I\!\left[e^{-j\gamma_{\mathrm{th}} I}\right] = \frac{1}{(1 + j\gamma_{\mathrm{th}})^{M-1}}.
\end{align}
Substituting back yields the closed-form expression~\eqref{eq:fama_pout}.
\end{proof}

\begin{remark}[Interference-Limited Floor]
As $\bar\gamma \to \infty$, $e^{-j\gamma_{\mathrm{th}}/\bar\gamma} \to 1$ and~\eqref{eq:fama_pout} approaches a positive constant, the \emph{outage floor}. Signal and interference both scale with $\bar\gamma$, so power increases do not reduce outage to zero. The floor is set by user count $M$, threshold $\gamma_{\mathrm{th}}$, and diversity order $K_k^*$. Widening aperture~$W_k$ raises $K_k^*$ and pulls the floor down, which is the principal benefit of FAS in FAMA systems.
\end{remark}

\begin{corollary}[Two-User FAMA]\label{cor:fama2}
For $M=2$ (one interferer), the outage simplifies to
\begin{align}\label{eq:fama2}
P_{\mathrm{out}}^{\mathrm{FAMA}} = \sum_{j=0}^{K^*} \binom{K^*}{j} \frac{(-1)^j\, e^{-jx}}{1 + j\gamma_{\mathrm{th}}},
\end{align}
where $x = \gamma_{\mathrm{th}}/\bar\gamma$. This expression is identical in structure to the single-user EDoF formula but with each term attenuated by the factor $(1+j\gamma_{\mathrm{th}})^{-1}$ due to interference.
\end{corollary}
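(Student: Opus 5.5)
The plan is to specialize Theorem~\ref{thm:fama} directly, setting $M=2$ and $K_k^*=K^*$. With a single interferer, the aggregate interference $I=\sum_{j\neq k}Y_j$ is one $\mathrm{Exp}(1)$ variable, i.e., $\mathrm{Gamma}(1,1)$. Its Laplace transform is therefore $\E_I[e^{-sI}]=(1+s)^{-1}$, which is exactly the $M-1=1$ case of the MGF identity used in the proof of Theorem~\ref{thm:fama}. Substituting $M-1=1$ into \eqref{eq:fama_pout} and writing $e^{-j\gamma_{\mathrm{th}}/\bar\gamma}=e^{-jx}$ with $x=\gamma_{\mathrm{th}}/\bar\gamma$ gives \eqref{eq:fama2} with no further computation.

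As an independent sanity check, I would also derive the result from scratch, since that makes the structural claim transparent. Condition on $I$, so that the outage probability becomes $(1-e^{-x-\gamma_{\mathrm{th}}I})^{K^*}$. Expand this binomially into $\sum_j\binom{K^*}{j}(-1)^j e^{-jx}e^{-j\gamma_{\mathrm{th}}I}$. Then integrate each term against the density $e^{-I}$, using $\int_0^\infty e^{-(1+j\gamma_{\mathrm{th}})t}\,dt=(1+j\gamma_{\mathrm{th}})^{-1}$. The sum is finite, so exchanging it with the expectation is trivially justified.

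For the interpretive part of the statement, set $\gamma_{\mathrm{th}}=0$ inside the attenuation factors, or equivalently remove the interferer. In that case every factor $(1+j\gamma_{\mathrm{th}})^{-1}$ equals $1$, and the sum collapses by the binomial theorem to $(1-e^{-x})^{K^*}$, which is \eqref{eq:edof}. This shows that \eqref{eq:fama2} is the single-user EDoF expansion with the $j$-th term attenuated by $(1+j\gamma_{\mathrm{th}})^{-1}$.

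The only point needing care is this structural remark rather than the algebra: \eqref{eq:edof} is a product form, so the comparison must be made term by term on its binomial expansion. Beyond that bookkeeping I do not expect any real obstacle.
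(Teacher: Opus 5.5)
Your proposal is correct and follows the paper's route: the corollary is just Theorem~\ref{thm:fama} with $M-1=1$, where the $\mathrm{Gamma}(1,1)$ MGF contributes the single factor $(1+j\gamma_{\mathrm{th}})^{-1}$ and $e^{-j\gamma_{\mathrm{th}}/\bar\gamma}=e^{-jx}$. Your from-scratch rederivation and the term-by-term comparison with the binomial expansion of \eqref{eq:edof} are accurate, and they make the structural remark explicit.
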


\vspace{-2mm}
\subsection{Extension to 2D Planar FAS}
The preceding analysis assumes a one-dimensional (1D) linear FAS aperture. We now extend the EDoF framework to a 2D planar FAS with a rectangular aperture of size $W_x\lambda \times W_y\lambda$.

For separable isotropic scattering, the 2D spatial correlation matrix can be decomposed as a Kronecker product:
\begin{align}\label{eq:R2d}
\mathbf{R}_{2\mathrm{D}} = \mathbf{R}_x \otimes \mathbf{R}_y,
\end{align}
where $\mathbf{R}_x$ and $\mathbf{R}_y$ are the correlation matrices along the $x$- and $y$-axes, respectively, each following the Jakes model with their respective apertures.

\begin{theorem}[2D EDoF]\label{thm:2d}
The EDoF for a 2D planar FAS with separable correlation is given by
\begin{align}\label{eq:Kstar_2d}
K_{2\mathrm{D}}^* = K_x^* \cdot K_y^* = \left(2\lceil W_x\rceil + 1\right)\!\left(2\lceil W_y\rceil + 1\right),
\end{align}
and the corresponding outage approximation is expressed as
\begin{align}\label{eq:edof_2d}
P_{\mathrm{out}}^{\mathrm{2D\text{-}EDoF}} = \left(1 - e^{-x}\right)^{K_{2\mathrm{D}}^*}.
\end{align}
\end{theorem}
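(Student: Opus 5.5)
The argument rests on the spectral structure of a Kronecker product. I will use the standard identity: if $\mathbf{R}_x=\mathbf{U}_x\boldsymbol{\Lambda}_x\mathbf{U}_x^H$ and $\mathbf{R}_y=\mathbf{U}_y\boldsymbol{\Lambda}_y\mathbf{U}_y^H$, then
\begin{align}
\mathbf{R}_{2\mathrm{D}}=(\mathbf{U}_x\otimes\mathbf{U}_y)(\boldsymbol{\Lambda}_x\otimes\boldsymbol{\Lambda}_y)(\mathbf{U}_x\otimes\mathbf{U}_y)^H .
\end{align}
So the eigenvalues of $\mathbf{R}_{2\mathrm{D}}$ are exactly the products $\lambda^{(x)}_i\lambda^{(y)}_j$, and the eigenvectors are $\mathbf{u}^{(x)}_i\otimes\mathbf{u}^{(y)}_j$. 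I will also note that the trace factorizes: $\mathrm{tr}(\mathbf{R}_{2\mathrm{D}})=N_xN_y$, and every diagonal entry equals $J_0(0)^2=1$. Consequently the unit-mean marginal argument leading to \eqref{eq:unit_mean} carries over verbatim to every port of the planar array.

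Next I apply the 1D spectral concentration result of Section~\ref{subsec:kl} to each axis separately.
\begin{itemize}
\item For $i\le K_x^*$, $\lambda^{(x)}_i=\Theta(N_x/K_x^*)$; for $i>K_x^*$, it decays exponentially. The same holds along the $y$-axis.
\item A product $\lambda^{(x)}_i\lambda^{(y)}_j$ is therefore of order $N_xN_y/(K_x^*K_y^*)$ exactly when $i\le K_x^*$ and $j\le K_y^*$.
\item Whenever either index exceeds its threshold, the product is bounded by (a dominant eigenvalue) $\times$ (an exponentially small eigenvalue), so it is itself exponentially small relative to $N_xN_y/K^*_{2\mathrm{D}}$.
\end{itemize}
Counting the index pairs $(i,j)$ in the dominant rectangle gives $K_{2\mathrm{D}}^*=K_x^*K_y^*$ significant eigenmodes. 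The discarded power, normalized as in \eqref{eq:kl_error}, is at most
\begin{align}
\frac{1}{N_y}\sum_{j>K_y^*}\lambda^{(y)}_j+\frac{1}{N_x}\sum_{i>K_x^*}\lambda^{(x)}_i ,
\end{align}
which is exponentially small. This gives the first part of the statement, \eqref{eq:Kstar_2d}.

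For the outage formula \eqref{eq:edof_2d}, I will repeat the reasoning of Definition~\ref{def:edof} and Proposition~\ref{prop:dimreduce} with $K^*$ replaced by $K_{2\mathrm{D}}^*$. The truncated channel is driven by $K_{2\mathrm{D}}^*$ independent KL coefficients, and each port is marginally $\mathrm{Exp}(1)$. Replacing the correlated maximum by a maximum over $K_{2\mathrm{D}}^*$ i.i.d.\ unit-mean branches yields $(1-e^{-x})^{K_{2\mathrm{D}}^*}$. I will also rerun the scaling argument in the proof of Theorem~\ref{thm:asymptotic}, with $2K_{2\mathrm{D}}^*$ real dimensions, to confirm that the exact outage scales as $x^{K_{2\mathrm{D}}^*}$. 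This shows the approximation has the correct diversity order, matching the multiplicative scaling claimed.

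The main obstacle is making ``significant'' precise for products of eigenvalues. In 1D the cliff is sharp, but a corner product such as $\lambda^{(x)}_{K_x^*+1}\lambda^{(y)}_1$ may not be negligible when $\lambda^{(x)}_{K_x^*+1}$ sits at the edge of the transition region. I would handle this the way the paper treats the 1D case: define significance relative to the threshold $N_xN_y/K_{2\mathrm{D}}^*$, and use the exponential decay ratio from each axis to bound every off-rectangle product by an exponentially small multiple of that threshold. I would accept that, as in 1D, the boundary count is an engineering approximation rather than a sharp theorem.
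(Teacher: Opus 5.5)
Your proposal follows the paper's proof essentially step for step, and your union bound on the discarded power is a correct quantitative addition that the paper omits. The shared steps are: the Kronecker eigendecomposition giving eigenvalues $\lambda_i^{(x)}\lambda_j^{(y)}$; a product counts as significant only when both factors lie in their 1D dominant clusters, which yields $K_x^*K_y^*$ modes; and the unit diagonal entries let the unit-mean EDoF formula carry over with $K^*$ replaced by $K_{2\mathrm{D}}^*$.

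One caution about your optional diversity step. Rerunning the scaling argument of Theorem~\ref{thm:asymptotic} only controls the $K_{2\mathrm{D}}^*$-truncated model. Because $\mathbf{R}_{2\mathrm{D}}$ is strictly positive definite, the untruncated outage behaves like $x^{N_xN_y}/\det\mathbf{R}_{2\mathrm{D}}$ as $x\to 0$, so that step should not be presented as part of this theorem.
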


\begin{proof}
Let $\mathbf{R}_x = \mathbf{U}_x\boldsymbol{\Lambda}_x\mathbf{U}_x^H$ and $\mathbf{R}_y = \mathbf{U}_y\boldsymbol{\Lambda}_y\mathbf{U}_y^H$ be the eigendecompositions of the 1D correlation matrices along the $x$- and $y$-axes. By the Kronecker product property, the eigendecomposition of $\mathbf{R}_{2\mathrm{D}}$ gives
\begin{align}
\mathbf{R}_{2\mathrm{D}} &= (\mathbf{U}_x \otimes \mathbf{U}_y)(\boldsymbol{\Lambda}_x \otimes \boldsymbol{\Lambda}_y)(\mathbf{U}_x \otimes \mathbf{U}_y)^H.
\end{align}
The eigenvalues of $\mathbf{R}_{2\mathrm{D}}$ are all pairwise products $\{\lambda_i^{(x)} \lambda_j^{(y)}: 1\leq i\leq N_x,\, 1\leq j\leq N_y\}$. By the Slepian-Landau-Pollak spectral concentration theorem, $\mathbf{R}_x$ has $K_x^* = 2\lceil W_x\rceil + 1$ significant eigenvalues (of order $N_x/K_x^*$), while the remaining eigenvalues are exponentially small. Similarly, $\mathbf{R}_y$ has $K_y^* = 2\lceil W_y\rceil + 1$ significant eigenvalues. A product $\lambda_i^{(x)}\lambda_j^{(y)}$ is significant only when \emph{both} $\lambda_i^{(x)}$ and $\lambda_j^{(y)}$ are significant (since exponentially small $\times$ anything $\approx$ 0). Hence the number of significant 2D eigenvalues is exactly $K_x^* \times K_y^*$, each approximately of order $N_x N_y / (K_x^* K_y^*)$. The diagonal entries of $\mathbf{R}_{2\mathrm{D}}$ are all unity (since $J_0(0)J_0(0) = 1$), so each 2D port still has unit-mean Rayleigh fading. The EDoF formula then applies with $K^*$ replaced by $K_{2\mathrm{D}}^* = K_x^* K_y^*$, yielding~\eqref{eq:edof_2d}.
\end{proof}

\begin{remark}[Scaling Law]
The 2D EDoF scales \emph{multiplicatively} with the two aperture dimensions. A square aperture $W_x = W_y = W$ gives $K_{2\mathrm{D}}^* = (2\lceil W\rceil+1)^2$, substantially larger than the 1D case. A $3\lambda \times 3\lambda$ planar FAS, for instance, has $K_{2\mathrm{D}}^* = 49$ versus $K^* = 7$ for a $3\lambda$ linear FAS, a $7\times$ gain in diversity order, pointing to the substantial potential of 2D FAS for high-reliability applications.
\end{remark}

\vspace{-2mm}
\subsection{Design Guidelines}

\begin{corollary}[Minimum Aperture for Target Diversity]\label{cor:min_W}
To achieve diversity order $d$, the minimum normalized aperture that is required, is given by
\begin{align}\label{eq:min_W}
W_{\min} = \frac{d - 1}{2}.
\end{align}
For example, $d = 7$ requires $W \geq 3$ (i.e., aperture $\geq 3\lambda$).
\end{corollary}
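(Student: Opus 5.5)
The plan is to read Corollary~\ref{cor:min_W} directly off Theorem~\ref{thm:diversity}, which gives the diversity order under the EDoF model as $d^{\mathrm{EDoF}} = K^* = 2\lceil W\rceil + 1$. Requiring diversity order at least $d$ is then the integer inequality $2\lceil W\rceil + 1 \geq d$, that is, $\lceil W\rceil \geq (d-1)/2$. Note that $K^*$ is always odd, so the achievable diversity orders are exactly the odd integers $d\geq 1$ (and $d=1$ is attained even by $W=0$, a single port). For odd $d$ the quantity $(d-1)/2$ is a nonnegative integer, and for an integer $m$ we have $\lceil W\rceil \geq m \iff W > m-1$.

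Carried out carefully, the exact achievable set is therefore $W > (d-3)/2$, and any $W \in ((d-3)/2,\,(d-1)/2]$ already yields $K^* = d$. The stated value $W_{\min} = (d-1)/2$ is the smallest \emph{integer} (whole-wavelength) aperture with this property. It is also the unique aperture in that interval at which the ceiling in $K^*$ is not rounding up, so the Slepian--Landau--Pollak count of $2W+1$ significant modes used in Section~\ref{subsec:kl} holds exactly rather than via rounding. I would therefore state the result as: $W = (d-1)/2$ suffices, by Corollary~\ref{cor:monotone_W} every larger $W$ also suffices, and with the ceiling convention nothing below $(d-3)/2$ works. For the worked example, $d=7$ gives $W_{\min}=3$ and $K^* = 2\cdot 3+1 = 7$.

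The only real subtlety, and the step I expect to be the obstacle, is precisely this ceiling. As written, ``$W\geq 3$ is required for $d=7$'' is not literally true under $K^*=2\lceil W\rceil+1$, since $W=2.5$ also gives $K^*=7$. I would resolve it by interpreting $W_{\min}$ as the aperture at which the continuous Slepian count $2W+1$ equals $d$, which is the physically meaningful threshold since the ceiling is a rounding device. Alternatively, one can restrict $W$ to integer multiples of $\lambda$, as in the example. Even $d$ is handled by noting that $K^*$ jumps in steps of two, so $d$ is first attained or exceeded at $W=\lceil (d-1)/2\rceil$.
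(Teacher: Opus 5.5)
Your argument is the paper's: invert $K^* = 2\lceil W\rceil + 1 \geq d$ using Theorem~\ref{thm:diversity}. The paper's proof goes directly from $\lceil W\rceil \geq (d-1)/2$ to $W \geq (d-1)/2$, and that is exactly the invalid step you flag. Your objection is correct. Under the ceiling convention the feasible set for odd $d$ is $W > (d-3)/2$; the paper itself later notes that $W=1.05$ already gives $K^*=5$, whereas the corollary claims $d=5$ needs $W \geq 2$. So $W_{\min} = (d-1)/2$ is a genuine minimum only under the whole-wavelength restriction or the continuous $2W+1$ reading, and you are right to state the result that way.

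Two small cleanups are needed. First, at $W = (d-3)/2$ itself one gets $K^* = d-2$, so the correct phrase is ``nothing at or below $(d-3)/2$ works'', not ``nothing below''. Second, for even $d$ the ceiling convention gives $K^* \geq d$ if and only if $W > d/2 - 1$. Your claim that $d$ is first reached at $W = \lceil (d-1)/2\rceil$ therefore holds only for integer $W$, while the continuous reading gives the half-integer $(d-1)/2$. Fix one interpretation and use it consistently for both odd and even $d$.
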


\begin{proof}
We need $K^* = 2\lceil W\rceil + 1 \geq d$, so $\lceil W\rceil \geq (d-1)/2$, giving $W \geq (d-1)/2$.
\end{proof}

\begin{corollary}[Outage Floor vs.\ Number of Ports]\label{cor:floor}
For fixed $W$ and $\bar\gamma$, the EDoF outage is \emph{independent} of $N$. Once $N \geq K^*$ ports are deployed (the minimum needed to resolve all modes), adding further ports yields no EDoF improvement. The true outage does decrease slightly with $N$ due to finer spatial sampling, but the gain saturates for $N/K^* \geq 3$.
\end{corollary}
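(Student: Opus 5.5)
The statement in Corollary~\ref{cor:floor} has three parts, which I would treat in order: first the independence of the EDoF outage from $N$, then the absence of EDoF improvement once $N \geq K^*$, and finally the behavior of the true outage with $N$.

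\textbf{EDoF outage does not depend on $N$.} Definition~\ref{def:edof} gives $P_{\mathrm{out}}^{\mathrm{EDoF}}(x) = (1-e^{-x})^{K^*}$ with $K^* = 2\lceil W\rceil+1$ and $x = \gamma_{\mathrm{th}}/\bar\gamma$. Neither quantity involves $N$. So for fixed $W$ and $\bar\gamma$, the expression is a constant function of $N$. This part is purely syntactic.

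\textbf{No improvement beyond $N \geq K^*$.} I would first argue that $N \geq K^*$ is the minimum needed to resolve all $K^*$ modes. The matrix $\mathbf{R}$ is $N\times N$, so it has at most $N$ eigenvalues. Hence the truncation in Proposition~\ref{prop:dimreduce} to $K^*$ modes is only meaningful when $N \geq K^*$. For $N < K^*$ the true diversity order is capped at $\mathrm{rank}(\mathbf{R}) \leq N$, and the model with $K^*$ branches is not the appropriate one. For $N \geq K^*$, the spectral concentration discussed in Section~\ref{subsec:kl} says the $K^*$ dominant eigenvalues carry essentially all of the trace $N$. Additional ports therefore only oversample the same $K^*$-dimensional subspace, and the EDoF value stays fixed by the first part.

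\textbf{Behavior of the true outage in $N$.} For the true outage I would use a nesting argument. Take $N' > N$ with a port set that contains the original set; the uniform grids $(N-1)\mid(N'-1)$, for instance, are nested. The maximum over a superset is at least the maximum over the subset, so $\{\max_{n\leq N'} X_n \leq x\} \subseteq \{\max_{n\leq N} X_n \leq x\}$. This gives monotone non-increase of $P_{\mathrm{out}}^{\mathrm{exact}}$ along nested grids. Combining this with Theorem~\ref{thm:asymptotic} and the bound $P^{\mathrm{exact}}_{\mathrm{out}}\le P^{\mathrm{EDoF}}_{\mathrm{out}}$ from Proposition~\ref{prop:ordering}, the true outage decreases but stays bounded below by the continuum limit. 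That limit is the outage of $\max_{\tau\in[0,1]}|h(\tau)|^2$ for the continuous Jakes field. The diversity order stays at $K^*$ for every $N \geq K^*$, so the decrease only affects the constant $\alpha_{K^*}$, which is increasing in $N$ and bounded.

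\textbf{Main obstacle: the saturation threshold $N/K^*\geq 3$.} I expect this quantitative claim to be the hard step. A rigorous version would need a Lipschitz or sampling bound: the continuous field is bandlimited with bandwidth of order $W$, so Bernstein's inequality controls how far the grid maximum can be from the continuous maximum in terms of the grid spacing $W/(N-1)$. I would try this route first. If the resulting constants are not sharp, I would state the ``$N/K^*\geq 3$'' threshold as an empirical observation backed by the numerical results of Section~\ref{sec:numerical}, rather than as a proven fact.
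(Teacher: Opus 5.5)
The paper gives no proof of this corollary. Its first claim is read off Definition~\ref{def:edof} exactly as you do. The rest rests only on the oversampling heuristic of Section~\ref{subsec:kl} and on the simulation in Fig.~\ref{fig:equiv_N}.

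Your route adds two genuinely useful things. First, the rank bound explains what the qualifier $N\geq K^*$ is for. ``No EDoF improvement'' is trivial for every $N$. But for $N<K^*$ the exact diversity order is at most $N<K^*$, so $(1-e^{-x})^{K^*}$ would turn optimistic at high SNR. Second, your nesting argument rigorously proves that the exact outage is non-increasing along nested grids. Combined with sample-path continuity, it gives more: along nested grids that become dense in the aperture, the exact outage decreases to $\Prob(\sup_{\tau\in[0,1]}|h(\tau)|^2\leq x)$, which is strictly positive. That is a qualitative saturation result the paper never establishes.

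The threshold $N/K^*\geq 3$ has no precise mathematical content, so calling it empirical is right. The paper's own figure actually puts saturation near $N\approx 28$--$40$ for $W=3$, i.e.\ $N/K^*\approx 4$--$6$. If you try the Bernstein route, note that the classical inequality needs a function bounded on the whole real line. Sample paths of the stationary Jakes field are not bounded there, so you would need a local version.

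Three points in your third step need fixing.

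First, the diversity order does \emph{not} stay at $K^*$. The kernel $J_0(2\pi W\tau)$ has spectral density $S(f)=1/(\pi\sqrt{W^2-f^2})$ on $|f|<W$. With $\tau_n=(n-1)/(N-1)$, this gives $\mathbf{c}^H\mathbf{R}\mathbf{c}=\int S(f)\,|\sum_n c_n e^{-j2\pi f\tau_n}|^2\,df>0$ for every $\mathbf{c}\neq\mathbf{0}$. So $\mathbf{R}$ is nonsingular, and by your own rank reasoning $P_{\mathrm{out}}^{\mathrm{exact}}\sim x^N/\det\mathbf{R}$ as $x\to 0$. The $x^{K^*}$ behaviour of Theorem~\ref{thm:asymptotic} is therefore only an intermediate-SNR regime, valid while $x$ is large compared with the discarded eigenvalues. ``Increasing $N$ only changes the constant'' holds only inside the $K^*$-truncated model, and you must say so explicitly.

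Second, within that model, a decreasing outage $P_{\mathrm{out}}^{\mathrm{exact}}\sim\alpha_{K^*}x^{K^*}$ means $\alpha_{K^*}$ is non-increasing in $N$. The quantity that grows is $\xi_{K^*}=1/\alpha_{K^*}$.

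Third, the continuum lower bound follows directly from the grid being a subset of the aperture. Theorem~\ref{thm:asymptotic} and Proposition~\ref{prop:ordering} play no role there, so drop those citations. In particular, Anderson's inequality only compares the exact outage with that of the $K^*$-truncated field, not with $(1-e^{-x})^{K^*}$. So $P_{\mathrm{out}}^{\mathrm{exact}}\leq P_{\mathrm{out}}^{\mathrm{EDoF}}$ is not something you should lean on.
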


\begin{remark}[Comparison Summary]
Table~\ref{tab:comparison} compares the key properties of different FAS outage approximation methods.
\end{remark}

\begin{table}[t]
\centering
\caption{Comparison of FAS Outage Approximation Methods}\label{tab:comparison}
\renewcommand{\arraystretch}{1.15}
\footnotesize
\begin{tabular}{l|c|c|c|c}
\hline
\textbf{Property} & \textbf{BCM} & \textbf{VBCM} & \textbf{EDoF} & \textbf{Ref.~WIM} \\
\hline
Closed form & \checkmark & \checkmark & \checkmark & \checkmark \\
Needs EVD & No & No & No & Yes \\
Bias & Optim. & Optim. & Cons. & Cons. \\
Diversity order & Approx. & Approx. & Exact & Exact \\
Gap to exact & Varies & Varies & $2$--$4\times$ & $3$--$5\times$ \\
Complexity & $O(D)$ & $O(D)$ & $O(1)$ & $O(K^*)$ \\
Params & $D,\rho,B$ & $D_k,\rho_k$ & $W$ & $W,\{\lambda_k\}$ \\
\hline
\end{tabular}
\end{table}

\vspace{-2mm}
\section{Numerical Results}\label{sec:numerical}
The analytical results in Sections~\ref{sec:edof}--\ref{sec:extensions} are verified here through Monte Carlo (MC) simulations. Accuracy is examined across SNR, aperture size, port count, and outage threshold. The ergodic capacity formula, BCM comparison, FAMA outage, and 2D FAS extension are each tested against simulation.

\vspace{-2mm}
\subsection{Simulation Setup}
Unless otherwise stated, the default system parameters are set as: $N = 20$ ports, normalized aperture $W = 3$ (yielding $K^* = 7$), and outage threshold $\gamma_{\mathrm{th}} = 0$~dB. All simulations use the Jakes correlation model~\eqref{eq:jakes} with correlated channel samples generated via the KL expansion~\eqref{eq:kl}. Five methods are compared throughout. The \textbf{Exact Monte Carlo (MC)} baseline uses $5\times10^5$ independent channel realizations, generating each correlated vector via $\mathbf{g} = \sqrt{\eta}\,\mathbf{U}\boldsymbol{\Lambda}^{1/2}\mathbf{z}$ and recording the maximum port gain. The \textbf{EDoF} approximation evaluates the closed-form expression~\eqref{eq:edof}, $(1-e^{-x})^{K^*}$, from the aperture size~$W$ alone. The \textbf{Refined WIM} uses the product-form expression~\eqref{eq:rwim}, $\prod_k(1-e^{-x/\beta_k})$, which additionally requires the $K^*$ dominant eigenvalues. The \textbf{i.i.d.\ bound}, $P_{\mathrm{out}}^{\mathrm{iid}} = (1-e^{-x})^N$, assumes full port independence and serves as an overly optimistic lower bound. The \textbf{single-antenna} reference $P_{\mathrm{out}} = 1-e^{-x}$ provides a no-diversity baseline.

\vspace{-2mm}
\subsection{Outage Probability vs.\ SNR}
Fig.~\ref{fig:pout_snr} illustrates outage versus average SNR for $N=20$, $W=3$. The EDoF formula~\eqref{eq:edof} tracks exact MC closely and serves as a tight conservative upper bound. The Refined WIM is slightly more conservative due to eigenvalue spread ($\beta_k$ ranging from $0.67$ to $1.50$ for $W=3$). EDoF and MC share the same high-SNR slope, confirming diversity order $K^* = 7$ per Theorem~\ref{thm:diversity}. The i.i.d.\ model ($N=20$ independent branches) severely underestimates the true outage, confirming that spatial correlation cannot be ignored. From a design standpoint, the $\approx 3\times$ conservatism at $W=3$ corresponds to at most $0.7$~dB per branch: a link budget can be set directly from~\eqref{eq:edof} with a quantified, bounded SNR margin.

\begin{figure}[t]
\centering
\includegraphics[width=0.9\columnwidth]{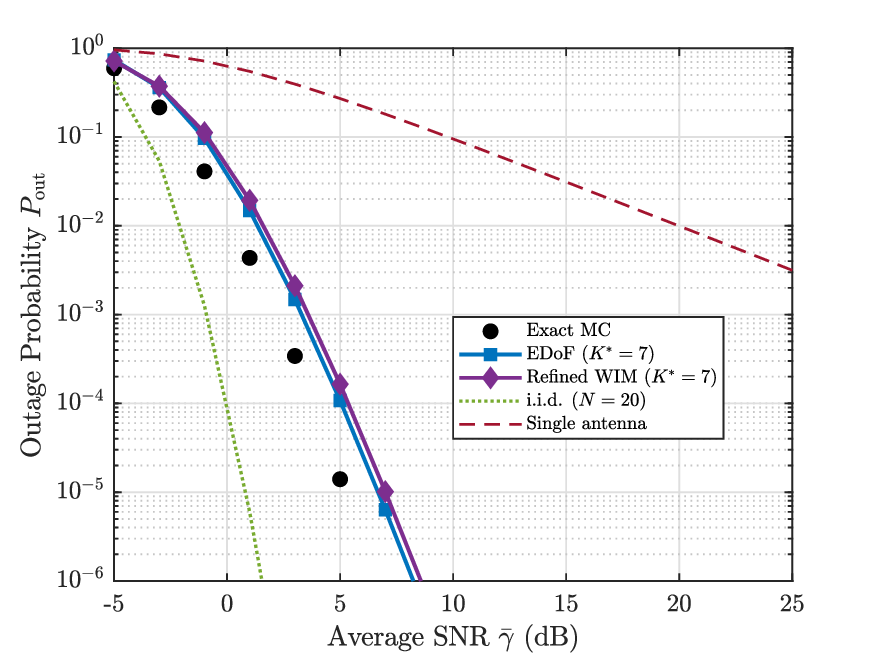}
\caption{Outage probability vs.\ average SNR for $N=20$, $W=3$ ($K^*=7$), $\gamma_{\mathrm{th}}=0$~dB.}\label{fig:pout_snr}
\end{figure}

\vspace{-2mm}
\subsection{Impact of Normalized Aperture}
In Fig.~\ref{fig:pout_W}, we show outage versus normalized aperture $W$ at $N=40$, and $\bar\gamma = 0$~dB. Outage decreases monotonically with $W$, consistent with Corollary~\ref{cor:monotone_W}. The EDoF curve exhibits a staircase pattern: within each integer step of $\lceil W\rceil$, $K^*$ is fixed and EDoF outage does not change. The results reveal that the exact MC outage decreases continuously within each step, since correlation improves continuously even when $K^*$ stays the same. The EDoF therefore captures the \emph{dominant} aperture effect (mode count) but not the \emph{secondary} one (eigenvalue conditioning within each group). The EDoF-to-MC gap stays at $2$--$4\times$ throughout, consistent with Theorem~\ref{thm:asymptotic}. At $W=3$, for instance, EDoF gives $\approx 4\times 10^{-2}$ versus MC at $\approx 1.3 \times 10^{-2}$. The staircase pattern implies a practical aperture selection rule: choosing $W$ just above an integer value (e.g., $W=1.05$ rather than $W=0.95$) activates two additional spatial modes at negligible extra aperture cost.

\begin{figure}[t]
\centering
\includegraphics[width=0.8\columnwidth]{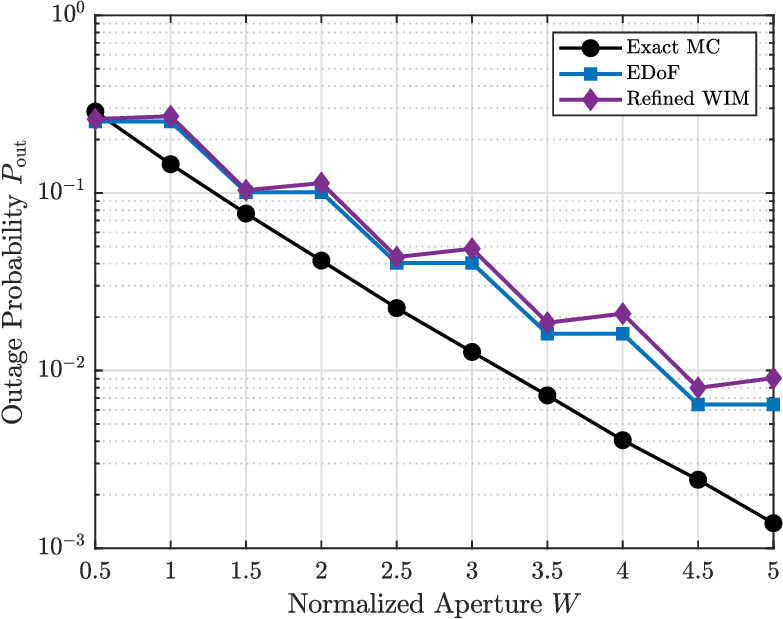}
\caption{Outage probability vs.\ normalized aperture $W$ for $N=40$, $\bar\gamma=0$~dB, $\gamma_{\mathrm{th}}=0$~dB.}\label{fig:pout_W}
\end{figure}

\vspace{-2mm}
\subsection{EDoF Accuracy Ratio}
In Fig.~\ref{fig:wim_ratio}, we plot $P_{\mathrm{out}}^{\mathrm{EDoF}}/P_{\mathrm{out}}^{\mathrm{exact}}$ versus SNR for $W \in \{1,2,3,4\}$, $N=20$. This figure directly validates the asymptotic tightness of Theorem~\ref{thm:asymptotic}: each ratio converges to a finite constant $\xi_{K^*}$ as SNR grows, confirming that EDoF and exact outage share the same $\bar\gamma^{-K^*}$ scaling. At $W=1$ ($K^*=3$) the ratio is near $1.5$, indicating excellent accuracy, the EDoF overestimates by only $50\%$. At $W=3$ ($K^*=7$), $\xi_{K^*} \approx 3$, translating to an SNR penalty of $10\log_{10}(3^{1/7}) \approx 0.7$~dB per diversity branch. The monotone convergence of each curve confirms that the ratio is well-behaved and does not oscillate, making the EDoF a reliable evaluation tool across all SNR regimes. Since $\xi_{K^*}$ stabilizes by $\bar\gamma\approx10$~dB for all tested apertures, the bound is already tight within the typical operating range of IoT and wearable terminals.

\begin{figure}[t]
\centering
\includegraphics[width=0.9\columnwidth]{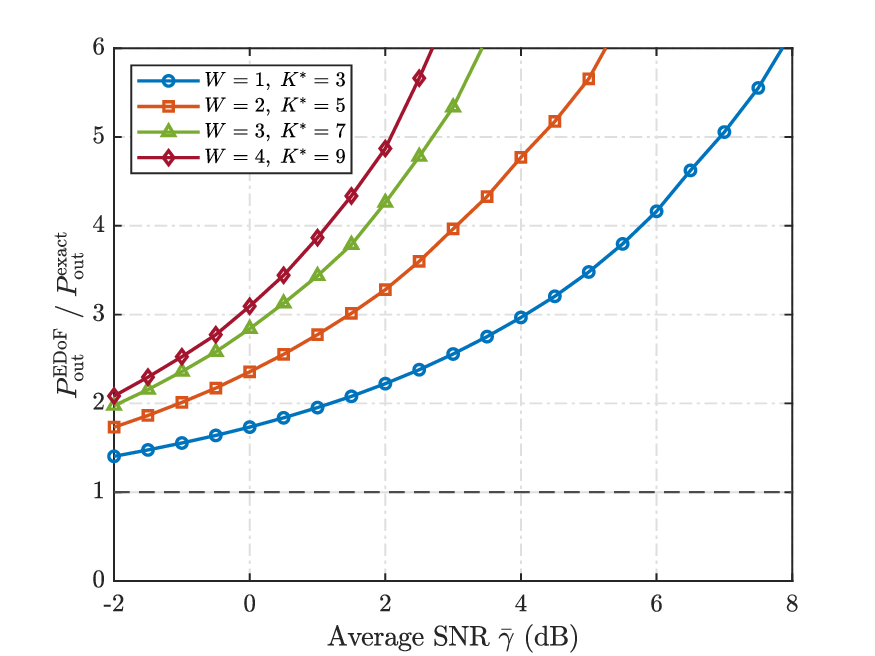}
\caption{EDoF accuracy ratio $P_{\mathrm{out}}^{\mathrm{EDoF}}/P_{\mathrm{out}}^{\mathrm{exact}}$ vs.\ SNR for $N=20$ and different apertures $W$.}\label{fig:wim_ratio}
\end{figure}

\vspace{-2mm}
\subsection{Scalability with Number of Ports}
Fig.~\ref{fig:equiv_N} varies the port count $N$ at fixed $W=3$, $\bar\gamma = 0$~dB, illustrating a key property of the EDoF framework: \emph{outage is set by aperture, not port count}. MC outage falls from $\approx 3.2\times10^{-2}$ at $N=8$ to $\approx 1.2\times10^{-2}$ at $N=100$, saturating beyond $N\approx30$--$40$. Ports beyond $N\approx4K^*\approx28$ yield negligible additional diversity, all $K^*=7$ effective modes are already resolved. The EDoF formula (horizontal line at $\approx 4.0\times10^{-2}$) is $N$-independent by construction and supplies a conservative bound at every port count. For $W=3$ ($K^*=7$), this places the practical deployment target at $N\approx28$ ports, suggesting that beyond this, additional ports add switching complexity without measurable outage reduction.

\begin{figure}[t]
\centering
\includegraphics[width=0.9\columnwidth]{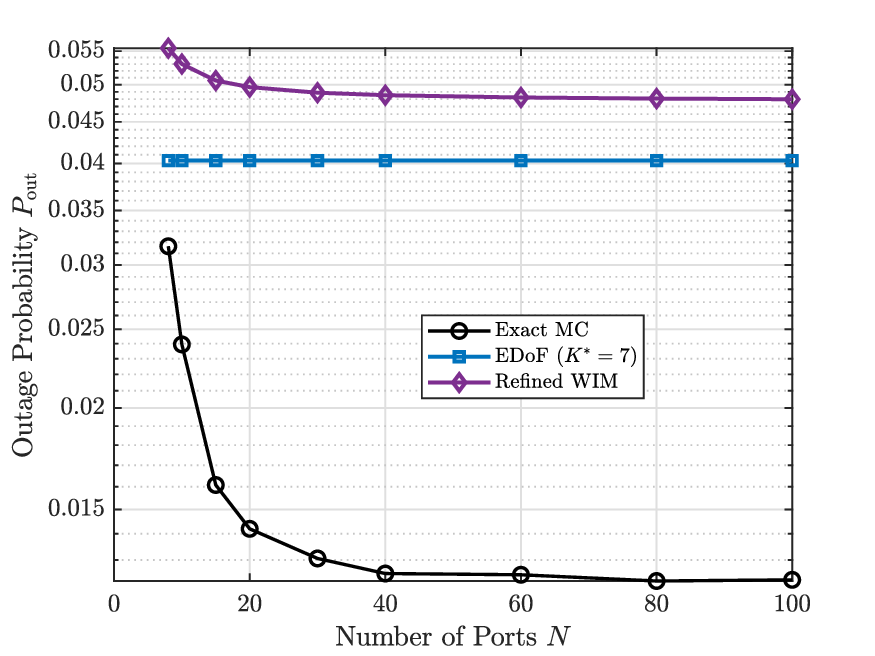}
\caption{Outage probability vs.\ number of ports $N$ for $W=3$, $\bar\gamma=0$~dB, $\gamma_{\mathrm{th}}=0$~dB.}\label{fig:equiv_N}
\end{figure}

\vspace{-2mm}
\subsection{Diversity Order Verification}
Fig.~\ref{fig:diversity} plots outage on a log-log scale for $W\in\{1,2,3\}$, corresponding to $K^*=3,5,7$. At high SNR, both the EDoF curves (solid lines) and MC markers exhibit straight-line behavior with slopes that precisely match the predicted diversity orders. Dashed reference lines with slopes $x^{K^*}$ serve as visual guides. The constant vertical offset between EDoF and MC curves at high SNR equals $\xi_{K^*}$ from Theorem~\ref{thm:asymptotic} and does \emph{not} grow with SNR, confirming asymptotic tightness. The figure also shows that increasing $W$ from $1$ to $3$ provides a dramatic improvement in the outage slope, translating to much steeper waterfall curves and much faster outage reduction per dB of SNR. At a target outage of $10^{-4}$, the step from $K^*=3$ to $K^*=7$ saves roughly $8$~dB of SNR, comparable to upgrading from a single antenna to 4-branch selection combining, yet achieved here within a $3\lambda$ aperture.

\begin{figure}[t]
\centering
\includegraphics[width=0.85\columnwidth]{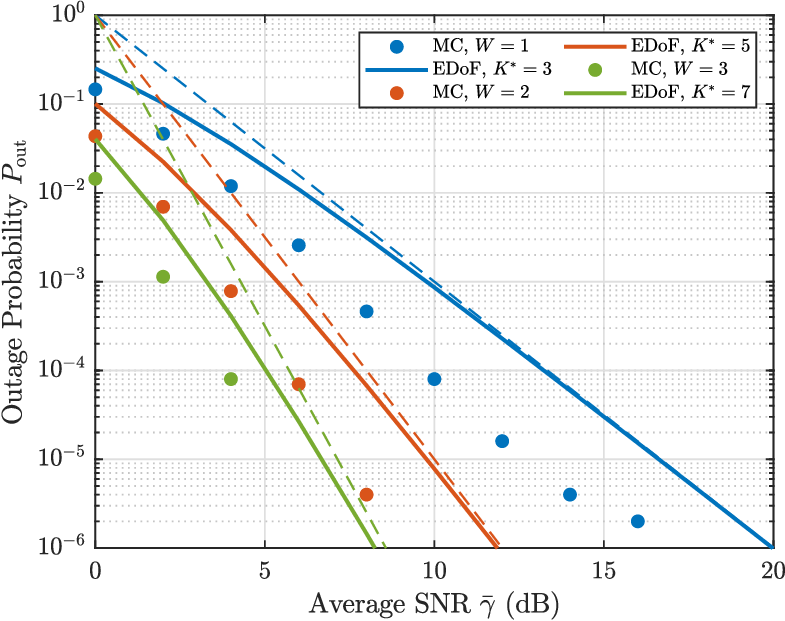}
\caption{Diversity order verification for $N=20$ and $W \in \{1,2,3\}$. Dashed lines show the asymptotic slope $x^{K^*}$.}\label{fig:diversity}
\end{figure}

\vspace{-2mm}
\subsection{Normalized Eigenvalue Analysis}
Table~\ref{tab:eigen} lists normalized eigenvalues $\beta_k = \lambda_k K^*/N$ for $N=20$ at several $W$ values. As expected, $\sum_k \beta_k \approx K^*$, with a spread around unity: leading modes have $\beta_k > 1$ and trailing ones have $\beta_k < 1$. The spread widens with $W$ as $N/K^*$ decreases. The EDoF formula $(1-e^{-x})^{K^*}$ corresponds to the limiting case $\beta_k=1$ for all $k$. Since $\bar\beta\approx0.97$ holds across all configurations, the unit-power assumption is well justified on average; Refined WIM is warranted only when outage accuracy to within a factor of two is required.

\begin{table}[t]
\centering
\caption{Normalized Eigenvalues $\beta_k = \lambda_k K^*/N$ for $N=20$}\label{tab:eigen}
\renewcommand{\arraystretch}{1.1}
\begin{tabular}{c|c|c|c|c}
\hline
$W$ & $K^*$ & $\bar\beta$ & $\beta_{\max}$ & $\beta_{\min}$ \\
\hline
1 & 3 & 0.97 & 1.23 & 0.60 \\
2 & 5 & 0.97 & 1.35 & 0.64 \\
3 & 7 & 0.97 & 1.50 & 0.67 \\
5 & 11 & 0.97 & 1.78 & 0.67 \\
\hline
\multicolumn{5}{l}{\small $\bar\beta$: average; $\beta_{\max}$, $\beta_{\min}$: extremes over $k=1,\ldots,K^*$.}
\end{tabular}
\end{table}

\vspace{-2mm}
\subsection{Ergodic Capacity Validation}
Fig.~\ref{fig:erg_cap} tests the closed-form ergodic capacity~\eqref{eq:cap_series} against MC for $N=20$, $W\in\{1,2,3,5\}$. Analytical curves agree closely with MC across the full $-5$ to $30$~dB SNR range. At $20$~dB, widening $W$ from $1$ to $5$ adds approximately $2$~bits/s/Hz, confirming both the formula's accuracy and its suitability for fast system-level evaluation. Note that the $2$~bits/s/Hz gain from $W=1$ to $W=5$ requires no extra RF chains or transmit power, in spectral efficiency terms, and it is comparable to adding a second receive antenna in a $1\times2$ maximum-ratio-combining (MRC) system, but achieved with a single chain within a compact aperture.

\begin{figure}[t]
\centering
\includegraphics[width=0.95\columnwidth]{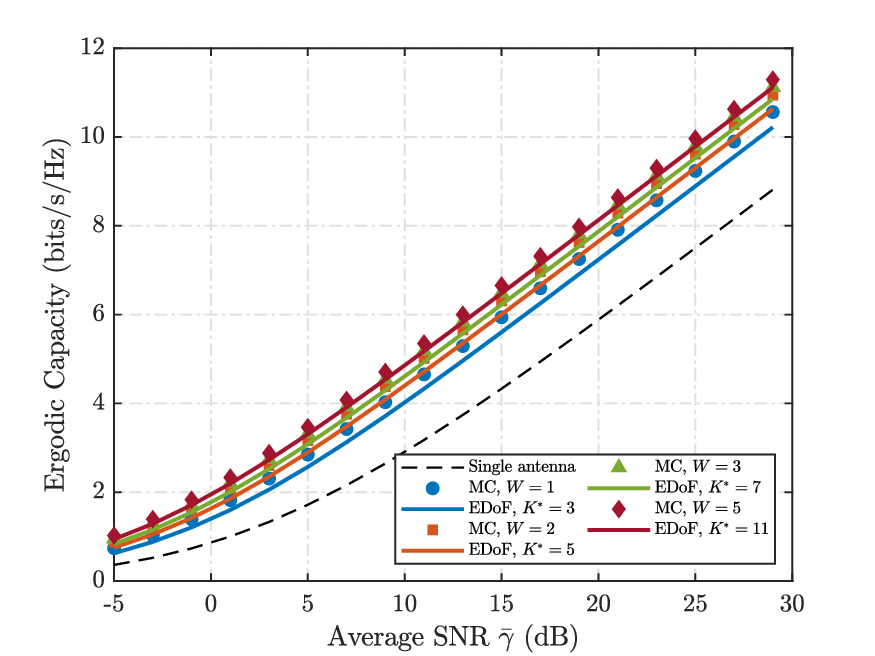}
\caption{Ergodic capacity vs.\ average SNR for $N=20$ and different apertures $W$. Lines: EDoF analytical~\eqref{eq:cap_series}; markers: MC.}\label{fig:erg_cap}
\end{figure}

\vspace{-2mm}
\subsection{Comparison with BCM and VBCM}
Fig.~\ref{fig:edof_bcm} compares EDoF with BCM~\cite{BC24} at $N=40$, $W=3$ ($K^*=7$). The BCM uses $D=4$ equi-correlated blocks of $B=10$ ports with $\rho\approx0.38$. VBCM~\cite{LaiX,TuoW} assigns block-specific parameters and would yield a different curve but shares the same systematic optimistic bias, since both models discard inter-block correlations. Discarding inter-block correlations causes BCM to underestimate the true outage at moderate-to-high SNR, consistent with its known optimistic bias. EDoF and Refined WIM stay above the true outage. The ordering $P_{\mathrm{out}}^{\mathrm{iid}} \leq P_{\mathrm{out}}^{\mathrm{BCM}} \leq P_{\mathrm{out}}^{\mathrm{exact}} \leq P_{\mathrm{out}}^{\mathrm{EDoF}} \leq P_{\mathrm{out}}^{\mathrm{Ref}}$ from Proposition~\ref{prop:ordering} is confirmed. At an outage target of $10^{-2}$, the BCM predicts the required SNR to be roughly $2$~dB less than actually needed; a link budget dimensioned on BCM will therefore fail its reliability specification.

\begin{figure}[t]
\centering
\includegraphics[width=0.95\columnwidth]{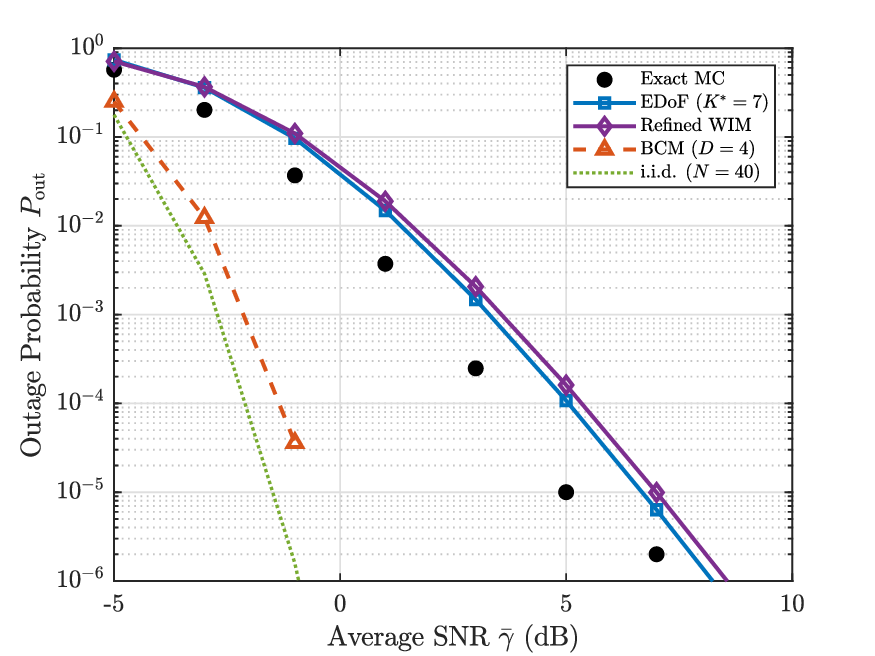}
\caption{EDoF vs.\ BCM outage comparison for $N=40$, $W=3$ ($K^*=7$), $\gamma_{\mathrm{th}}=0$~dB.}\label{fig:edof_bcm}
\end{figure}

\vspace{-2mm}
\subsection{Outage vs.\ Threshold}
In Fig.~\ref{fig:pout_th}, we sweep the outage threshold $\gamma_{\mathrm{th}}$ at $N=20$, $W=3$, $\bar\gamma=10$~dB. Both EDoF and Refined WIM track MC accurately over the full $-10$ to $10$~dB range. The log-scale gap between EDoF and MC stays approximately constant, matching the multiplicative factor $\xi_{K^*}$ of Theorem~\ref{thm:asymptotic}. The single-antenna curve shows the scale of diversity gain available from FAS: at $\gamma_{\mathrm{th}}=0$~dB, the FAS outage is more than two orders of magnitude below the single-antenna value. Since the EDoF-to-MC gap is constant across all threshold values, a single calibration of $10\log_{10}(\xi_{K^*})/K^*\approx0.7$~dB per branch compensates the approximation error at any operating point, making the correction independent of the chosen threshold.

\begin{figure}[t]
\centering
\includegraphics[width=0.95\columnwidth]{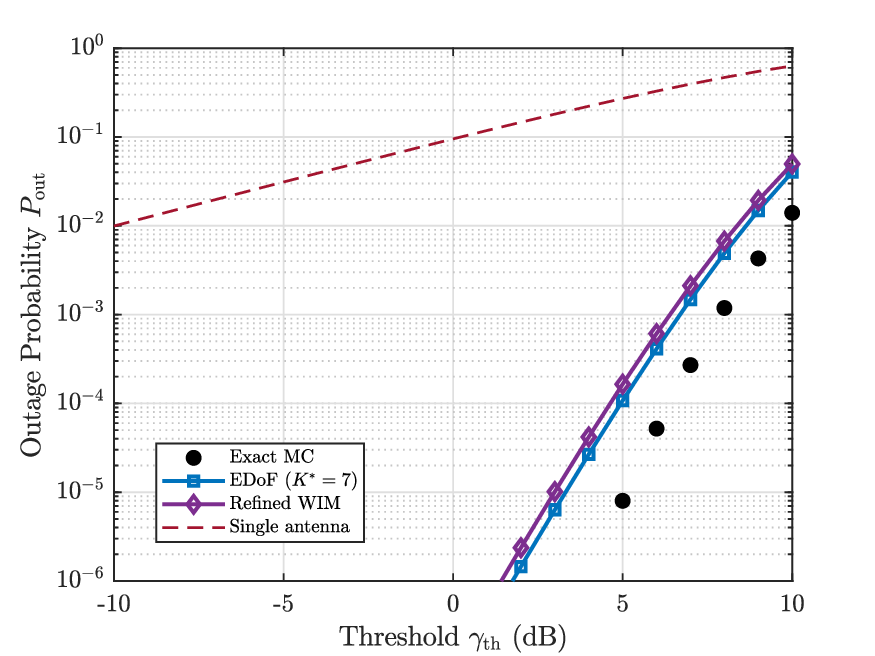}
\caption{Outage probability vs.\ threshold $\gamma_{\mathrm{th}}$ for $N=20$, $W=3$ ($K^*=7$), $\bar\gamma=10$~dB.}\label{fig:pout_th}
\end{figure}

\vspace{-2mm}
\subsection{FAMA Outage Validation}
Fig.~\ref{fig:fama} plots the FAMA outage from Theorem~\ref{thm:fama} versus SNR for $W=3$ ($K^*=7$) and $M \in \{1,2,3,5\}$ users at $\gamma_{\mathrm{th}} = 0$~dB. Several important observations emerge. First, for a single user ($M=1$), the formula reduces exactly to the interference-free EDoF result $(1-e^{-x})^{K^*}$, validating internal consistency. Second, as $M$ increases, the outage rises and develops a characteristic \emph{outage floor} at high SNR, hallmark of interference-limited systems. The floors are at $\approx0.12$ ($M=2$), $\approx0.34$ ($M=3$), and $\approx0.73$ ($M=5$), consistent with the high-SNR limit of~\eqref{eq:fama_pout}. Third, beyond $15$~dB, further SNR increase brings negligible improvement for $M\geq3$, confirming the interference-limited regime. This underscores the importance of aperture-based solutions (larger $W$, hence higher $K^*$) over power-based ones for multi-user FAS~\cite{WongFAMA}.  

\begin{figure}[t]
\centering
\includegraphics[width=0.95\columnwidth]{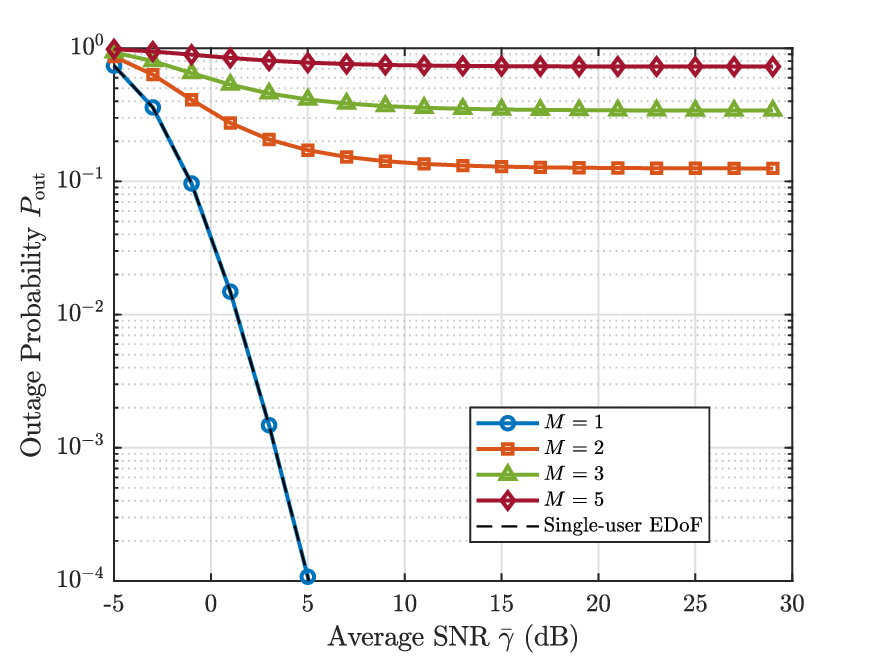}
\caption{FAMA outage probability vs.\ SNR for $W=3$ ($K^*=7$), $\gamma_{\mathrm{th}}=0$~dB, and different numbers of users $M$.}\label{fig:fama}
\end{figure}

\vspace{-2mm}
\subsection{1D vs.\ 2D FAS Comparison}
Fig.~\ref{fig:2d_fas} compares 1D and 2D EDoF outage at $\gamma_{\mathrm{th}}=0$~dB, confirming the multiplicative scaling of Theorem~\ref{thm:2d}. The $2\lambda\times2\lambda$ planar FAS gives $K_{2\mathrm{D}}^*=25$ versus $K^*=5$ for 1D $W=2$; the $3\lambda\times3\lambda$ case gives $K_{2\mathrm{D}}^*=49$ versus $K^*=7$ for 1D $W=3$. At a target outage of $10^{-3}$, the $3\!\times\!3$ planar FAS requires roughly $18$~dB less SNR than a single antenna and about $8$~dB less than the 1D $W=3$ configuration. The $2\!\times\!2$ planar FAS even outperforms 1D $W=3$ by about $3$~dB despite its smaller total aperture. Spreading the aperture across two dimensions is thus more efficient for capturing spatial diversity than extending it in one direction. As both 1D and 2D configurations use a single RF chain, the $8$~dB gain of $3\times3$ over 1D $W=3$ within the same footprint is purely architectural, a direct incentive to adopt planar FAS geometries in compact terminals where antenna area is constrained.

\begin{figure}[t]
\centering
\includegraphics[width=0.95\columnwidth]{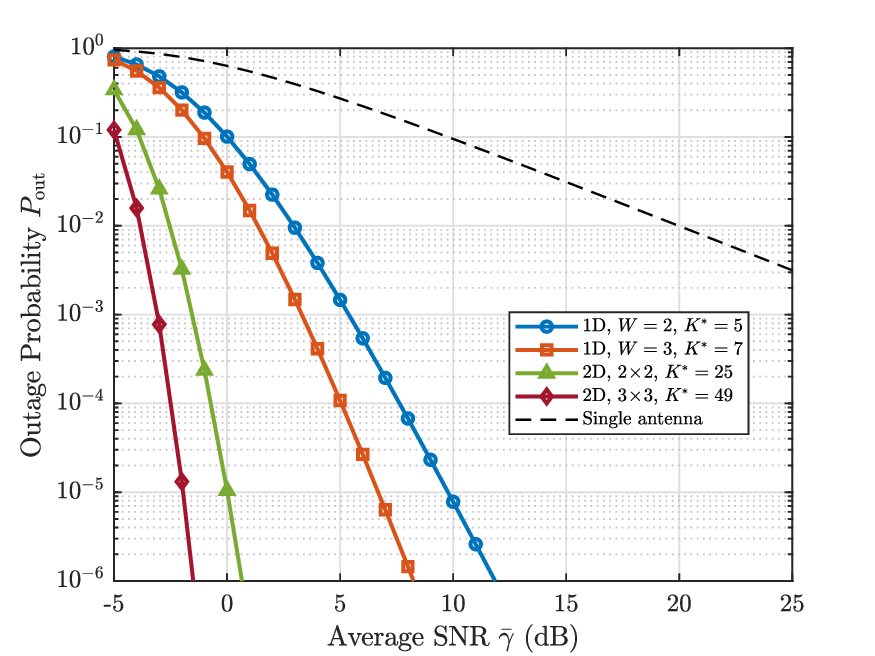}
\caption{EDoF outage comparison: 1D linear FAS vs.\ 2D planar FAS for $\gamma_{\mathrm{th}}=0$~dB.}\label{fig:2d_fas}
\end{figure}

\vspace{-2mm}
\section{Conclusion}\label{sec:conclusion}
This paper revealed that the spatial correlation of a FAS with normalized aperture $W$ is governed by only $K^* = 2\lceil W\rceil+1$ dominant eigenmodes, regardless of how many ports are deployed. This observation directly yields the EDoF outage formula: a simple closed-form expression equivalent to selection combining over $K^*$ independent Rayleigh branches. The formula requires no eigendecomposition and no numerical integration, is provably conservative (it never underestimates the true outage), and gives the exact diversity order $K^*$. Existing block-correlation models, by contrast, systematically underestimate the true outage. A refined WIM approximation was also developed that replaces the equal-power branches of EDoF with eigenvalue-weighted branches $\{\beta_k\}$, tightening the outage bound at moderate SNR while keeping a closed-form product structure. The framework was extended to: closed-form ergodic capacity using order statistics; multi-user FAMA with explicit interference-limited outage floors; and 2D planar FAS where the diversity order $K_{2\mathrm{D}}^* = (2\lceil W_x\rceil+1)(2\lceil W_y\rceil+1)$ grows multiplicatively with the two aperture dimensions. All results were confirmed by Monte Carlo simulations.

\end{document}